\DeclareMathAlphabet\mathbfcal{OMS}{cmsy}{b}{n}
\theoremstyle{definition}
\newtheorem{Theorem}{Theorem}
\newtheorem{Proposition}{Proposition}
\newtheorem{Corollary}{Corollary}
\newtheorem{Remark}{Remark}
\newtheorem{Definition}{Definition}
\DeclareMathOperator*{\argmax}{arg\,max}
\begin{document}
%


\title{Privacy Limits in Power-Law Bipartite Networks under Active Fingerprinting Attacks}

\author{

\IEEEauthorblockN{ Mahshad Shariatnasab, Farhad Shirani, Zahid Anwar}
\\\IEEEauthorblockA{North Dakota State University,
\\Email: $\{$mahshad.shariatnasab, f.shiranichaharsoogh,zahid.anwar$\}$@ndsu.edu,
}
}


%


\maketitle

 \begin{abstract}
This work considers the fundamental privacy limits under active fingerprinting attacks in power-law bipartite networks.
The scenario arises naturally in social network analysis, tracking user mobility in wireless networks, and forensics applications, among others. A stochastic growing network generation model --- called the popularity-based model ---  is investigated, where the bipartite network is generated iteratively, and in each iteration vertices attract new edges based on their assigned popularity values. It is shown that using the appropriate choice of initial popularity values, the node degree distribution follows a power-law distribution with arbitrary parameter $\alpha>2$, i.e. fraction of nodes with degree $d$ is proportional to $d^{-\alpha}$. An active fingerprinting deanonymization attack strategy called the augmented information threshold attack strategy (A-ITS) is proposed which uses the attacker's knowledge of the node degree distribution along with the concept of information values for deanonymization. Sufficient conditions for the success of the A-ITS, based on network parameters, are derived. It is shown through simulations that the proposed attack significantly outperforms the state-of-the-art attack strategies. 
\end{abstract}


%
\IEEEpeerreviewmaketitle

\section{Introduction}

Bipartite networks model a range of application scenarios in social network analysis \cite{capocci2006preferential,newman2001clustering}, tracking mobility in wireless  networks \cite{takbiri2017limits , takbiri2019asymptotic, de2013unique,li2020drive2friends}, pandemic-related contact tracing \cite{dar2020applicability}, and security and forensics \cite{vassio2017users}. In this work, we consider bipartite networks whose vertices are partitioned into \textit{user vertices} and \textit{group vertices}, where the group vertices may represent social network groups, locations visited by users, users' online activities and browsing habits, etc.  For instance, in social networks, the users' group memberships are modeled using a bipartite network \cite{kruegel,fire2014online,su2017anonymizing}, where an edge between a user vertex and a group vertex indicates that the user is a member of that group. 

 Companies use tracking tools to monitor users' online activities at varying level of intrusiveness. Sophisticated technologies such as third-party cookies, web beacons and click streams track internet addresses, order in which pages are viewed, and even the location of users when browsing websites \cite{verizon-tracking2021}. 
This data collection can be used to construct a `digital fingerprint' for network users.
In this work, we wish to find out \textit{when can user fingerprinting via data collection lead to deanonymization?}
In particular, we study the privacy limits in bipartite networks under active fingerprinting attacks, where, an anonymous victim is targeted by an attacker (e.g. the victim visits a malicious website), and the attacker queries her group memberships sequentially (e.g. by querying the browser history). The attacker constructs a fingerprint for the victim based on the received query responses, and by comparing this fingerprint to that of the network users, which is acquired through scanning the publicly available network graph, it identifies the victim. The problem was initially introduced and studied by  Wondracek et al. \cite{kruegel}, where an attack strategy was proposed and its effectiveness was illustrated in simulations of the attack scenario in real-world networks. The fundamental privacy limits were studied under various assumptions on the graph network in \cite{su2017anonymizing,shirani2017information,shirani2018optimal,shariatnasab2021fundamental}.

In \cite{shariatnasab2021fundamental}, we introduced a stochastic graph generation model, called the popularity-based model, proposed the information threshold strategy (ITS), and derived its fundamental performance limits in terms of expected number of queries necessary for successful denaonymization with vanishing probability of error as the graph size grows asymptotically large. The ITS strategy queries the group memberships of the victim starting with the first group in the network, and at each step, finds the \textit{information value} of each user which captures the likelihood of that user being the victim given the  query responses. It identifies a user as the victim if the information value passes a predetermined threshold. The analytical techniques in \cite{shariatnasab2021fundamental} leverage ideas from data transmission over channels with feedback \cite{burnashev1976data}. 
The ITS is agnostic to the network degree distribution. That is, it does not choose the groups to be queried based on their sizes. In this work, we improve the ITS and propose the Augmented Information Threshold Strategy (A-ITS) in which the attacker chooses which group to query based on the group sizes. The performance analysis of such strategy is challenging and requires characterizing the group degree distribution as well as the memory structure of the edges in the network. The analysis of the degree distribution and edge memory (Section \ref{sec:gen_deg_mem}) may be of independent interest in graph analysis applications as well. In summary, the contributions of this work are as follows:
\begin{itemize}[leftmargin=*]
    \item We derive the node degree distribution under the proposed popularity-based graph generation models.
    \item We show that with the appropriate choice of initial popularities, the degree distribution follows a power-law with arbitrary parameter $\alpha>2$.
    \item We show that under a sparsity condition on the number of graph edges, which requires the number of edges to grow
    linearly in the number of users, the user fingerprints are \textit{almost} memoryless.
    \item We propose the A-ITS strategy which leverages the attacker's knowledge of the node degree distribution, derive sufficient conditions for its success, and provide simulation results to illustrate the  performance gains of A-ITS as compared with ITS.
\end{itemize}

\textit{Notation:} The random variable $\mathbbm{1}_{\mathcal{E}}$ is the indicator of the event $\mathcal{E}$.
 The set $\{n,n+1,\cdots, m\}, n,m\in \mathbb{N}$ is represented by $[n,m]$, and for the interval $[1,m]$, we use the shorthand notation $[m]$. 
 For a given $n\in \mathbb{N}$, the $n$-length vector $(x_1,x_2,\hdots, x_n)$ is written as $x^n$. For $x\in \mathbb{R}$, we have defined $\lfloor x \rfloor\triangleq \max\{i\leq x| i\in \mathbb{Z}\}$ and $\lceil x \rceil\triangleq \min\{i\geq x| i\in \mathbb{Z}|\}$.

\section{Graph Generation, Degree Distribution, and Fingerprint Memory}
\label{sec:gen_deg_mem}
This section introduces the stochastic graph model, characterize the resulting degree distribution, and derives several statistical properties  which are used in the sequel. 
\subsection{Popularity-based Bipartite Graph {Generation} Model}
\label{sec:gen}

A bipartite graph is formally defined as follows.

\begin{Definition}[\textbf{Bipartite Graph}]
A bipartite graph $\mathcal{G}=(\mathcal{V}_1,\mathcal{V}_2,{\mathcal{E}})$, has vertex set $\mathcal{V}_1\bigcup \mathcal{V}_2$ and edge set ${\mathcal{E}}\subseteq \{(v_{1,i},v_{2,j})|v_{1,i}\in \mathcal{V}_1, v_{2,j}\in \mathcal{V}_2\}$, where $\mathcal{V}_1\cap \mathcal{V}_2=\phi$. The sets $\mathcal{V}_1$ and $\mathcal{V}_2$ are called the left-vertices and right-vertices of $\mathcal{G}$, respectively. We define $m\triangleq |\mathcal{V}_1|$, $n\triangleq|\mathcal{V}_2|$, and  $\Delta\triangleq |\mathcal{E}|$.
\end{Definition}

\begin{Definition}[\textbf{Neighbors and Degree of a Vertex}]
For right-vertex $v_{2,j},j\in [n]$, the set $\mathcal{V}_{2,j}\triangleq \{v_{1,i}|(v_{1,i},v_{2,j})\in \mathcal{E}\}$ is called the set of neighbors of $v_{2,j}$, and $D_{j}\triangleq|\mathcal{V}_{2,j}|$ is called the degree of $v_{2,j}$. The set of neighbors and degree of left-vertices $v_{1,i},i \in [m]$ are defined similarly. 
 \end{Definition}

 The left-vertices in a bipartite graph are assigned \textit{fingerprints} based on their connections to the right-vertices. That is, the fingerprint consists of a vector of indicator variables indicating the existence or lack of existence of edges between the left-vertex under consideration and each of the right-vertices. This is formalized below.  
 \begin{Definition}[\textbf{Fingerprint of a Left-Vertex}]
For a given left-vertex $v_{1,i},i\in [m]$, the sequence of indicator variables $(\mathbbm{1}\{v_{1,i}\in \mathcal{V}_{2,1}\},\mathbbm{1}\{v_{1,i}\in \mathcal{V}_{2,2}\},\cdots, \mathbbm{1}\{v_{1,i}\in \mathcal{V}_{2,n}\})$ is called the fingerprint vector of $v_{1,i}$. 
 \end{Definition}

The popularity-based generation process is as follows:
\\\noindent \textbf{Initiation:} Fix the model parameter $\mu\in \mathbb{N}$, and let  $\Delta\triangleq\mu n$.  The iterative process is initiated by considering a bipartite graph $\mathcal{G}(0)=(\mathcal{V}_1, \mathcal{V}_2,\mathcal{E}(0))$, where the vertex sets  $\mathcal{V}_1\triangleq \{v_{1,1},v_{1,2},\cdots, v_{1,m}\}$, and $\mathcal{V}_2\triangleq \{v_{2,1},v_{2,2},\cdots, v_{2,n}\}$ are fixed through the iteration process, and $\mathcal{E}(0)=\phi$, i.e. there are no edges in the initial graph.  Each vertex $v_{2,j}, j\in [n]$ is assigned an initial popularity value $\tau_j(0)>0$.  The graph is generated in $\Delta$ iterative steps, where at each step, a single edge is added to $\mathcal{E}$ as described in the sequel.
\\\textbf{Step t:}  At step $t\in [\Delta]$, an edge $(v_{1,I_t}, v_{2,J_t}), I_t\in [m],J_t\in [n]$ is chosen as described next and added to the bipartite graph, i.e. $\mathcal{E}(t)=$ $\mathcal{E}(t-1)\cup \{(v_{1,I_t}, v_{2,J_t})\}$.  First, a right-vertex $v_{2,J_t}$ is chosen from the set  $\mathcal{V}_2$ by choosing $J_t$ randomly according to the probability distribution $\mathbf{P}(t)=(P_1(t),P_2(t),\cdots,$ $P_n(t))$ defined below:
\begin{align*}
    P_j(t)\triangleq \frac{\tau_{j}(t-1)}{\sum_{j'=1}^n \tau_{j'}(t-1)}, j \in [n] 
\end{align*}
 Next, a left-vertex $v_{1,I_t}$ is chosen randomly and uniformly from the set $[m]- \mathcal{V}_{2,J_t}(t-1)$.  The edge $(v_{1,I_t},v_{2,J_t})$ is added to the edge set. 
The popularity values are updated as  $\tau_j(t)= \tau_j(t-1)+\mathbbm{1}( J_t=j),     j\in [n]$. 
Equivalently, $\tau_j(t)= D_{j}(t)+\tau_j(0)$, where $D_{j}(t)$ is the degree of $v_{2,j}$ at time $t$.
 \setlength\arrayrulewidth{1pt}
 \begin{table*}
  \centering
\begin{tabular}{|ll|ll|ll|}
\hline
 $g$:& bipartite graph & $n$:  & $\#$ of groups &  $m$: & $\#$ of users    \\ 
 \hline
 $\mu$: & groups average size & $\mathcal{V}_1 :$ & users' vertex set & $\mathcal{V}_2$: & groups' vertex set   \\ \hline
$\Delta$: & $\#$ of edges & $\ell$: & $\#$ of attributes & $\tau_{j}(t)$: & popularity of $v_{2,j}$ at time $t$   \\ \hline
$D_j(t)$:& degree of $v_{2,j}$ at time t  &  $\mathcal{V}_{1,i}$:& left-vertices connected to $v_{1,i}$ &$\mathcal{V}_{2,j}$:& right-vertices connected to $v_{2,j}$  \\ 
\hline
\end{tabular}
\caption*{Notation Table: Random Bipartite Graphs}
\vspace{-.2in}
\end{table*}
 We investigate the degree distribution of bipartite graph under the following asymptotic regime:
i) number of left-vertices $m$ is taken to be asymptotically large, i.e. $m\to \infty$, ii) number of right-vertices $n$ grows linearly in $m$, i.e. $m=\beta n$ for a fixed $\beta>0$. 
iii) average value of right-vertex degrees is constant as the network grows. That is, $\Delta=  \mu n= \frac{\mu}{\beta}m, \mu \geq 1$, so that the average degree $\mu$ is constant in $n$. The latter condition is a {sparsity} condition, which is analogous to the scale-free property in linear and sublinear preferential attachment models  \cite{peruani2007emergence,capocci2006preferential,kunegis2013preferential,borrel2005preferential}. 

\subsection{Degree Distribution and the Power-law}
A critical feature of scale-free network generation models such as the well-studied Barb\'asi-Albert models, is that the resulting degree distribution follows a power-law. That is, the expected number of vertices with degree $d$ is proportional to $d^{-\alpha}$ for some $\alpha>0$. Such power-law behavior is observed empirically in real-world graphs such as social network group memberships, wireless mobility networks, and online shopping habits \cite{capocci2006preferential,newman2001clustering}. In the following, we show that the degree distribution of the right-vertices under the generation process described in Section \ref{sec:gen}, follows the power-law distribution with parameter $\alpha>0$, where $\alpha$ can be controlled by the appropriate choice of the initial popularity values. 

\begin{Definition}[\textbf{$(\mu, n,m,\alpha)$-bigraph}]
Given $\mu,\alpha>0$ ,and $n,m\in \mathbb{N}$,
a $(\mu, n,m,\alpha)$-bigraph is a bipartite graph with $n$ left-vertices, $m$ right-vertices, $\Delta=n\mu$ edges, and initial popularity values distributed according to:
\begin{align*}
    P(\tau_j(0)=k)=\frac{1}{\zeta(m,\alpha) k^\alpha}, k\in [m],
\end{align*}
where $\zeta(m,\alpha)\triangleq \sum_{i=1}^m \frac{1}{i^{\alpha}}$, and the initial popularity values are mutually independent.
\end{Definition}
Note that $\lim_{m\to \infty} \zeta(m,\alpha)$ is the well-studied Riemann-Zeta function (e.g. \cite{titchmarsh1986theory}), evaluated at $\alpha$.
The following theorem shows that  given $\alpha>2$, the degree distribution of a $(\mu, n,m,\alpha)$-bigraph converges to the power-law distribution with parameter $\alpha$ as $m\to \infty$. 

\begin{Theorem}[\textbf{Power-law in Popularity-based Models}]
\label{th:1}
Fix $\mu,\beta>0$ and $\alpha>2$, and let $\mathcal{G}_m=(\mathcal{V}_1,\mathcal{V}_2,\mathcal{E})$ be a sequence of $(\mu, n,\beta n,\alpha)$-bigraphs. Then, 
\vspace{-.1in}
\begin{align}
    P(D_{j}(\Delta)=k)=\frac{c}{k^\alpha}+o(\frac{1}{k^\alpha}), j\in [n],
    \label{eq:th:1}
\end{align}
\vspace{-.1in}
where $k= o(n^{\frac{-\beta}{\alpha}})$, and $\beta\triangleq max(-1,2-\alpha)$.
\end{Theorem}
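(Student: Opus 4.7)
The plan is to exploit the fact that, conditional on the initial popularities $\tau_1(0),\ldots,\tau_n(0)$, the evolution of the right-vertex degrees under the popularity-based process is exactly an $n$-color generalized P\'olya urn: at step $t$ a ``color'' $J_t$ is drawn proportional to its current count $\tau_{J_t}(t-1)$ and that count is then incremented by one. Focusing on a single right-vertex $v_{2,j}$ reduces this to a two-color urn, and combining this reduction with the already power-law initial popularities should transfer the $k_0^{-\alpha}$ decay of $\tau_j(0)$ into the $k^{-\alpha}$ decay of $D_j(\Delta)$.

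Writing $T_0 \triangleq \sum_{j'=1}^n \tau_{j'}(0)$ and $(x)_r \triangleq x(x+1)\cdots(x+r-1)$, the first step is to record the exact conditional identity
\begin{equation*}
P\bigl(D_j(\Delta)=k \,\big|\, \tau_j(0)=k_0,\, T_0=t_0\bigr)
=\binom{\Delta}{k}\,\frac{(k_0)_k\,(t_0-k_0)_{\Delta-k}}{(t_0)_\Delta},
\end{equation*}
the standard beta-binomial marginal of a P\'olya urn. Since $\alpha>2$ the initial popularities have finite mean $\bar{\tau} \triangleq \lim_{m\to\infty}\zeta(m,\alpha-1)/\zeta(m,\alpha)$, so $T_0/n \to \bar{\tau}$ in probability, with rate $n^{-1/2}$ when $\alpha>3$ and rate $n^{1/\alpha-1}$ when $2<\alpha\le 3$ via the generalized central limit theorem for heavy-tailed sums. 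Combined with $\Delta=\mu n$, a Stirling expansion of the three Pochhammer symbols yields, for $k,k_0$ much smaller than $n$, the negative-binomial approximation
\begin{equation*}
P\bigl(D_j(\Delta)=k \,\big|\, \tau_j(0)=k_0\bigr)
\;\approx\; \binom{k_0+k-1}{k}\,p^k q^{k_0}, \qquad p \triangleq \frac{\mu}{\mu+\bar{\tau}},\ \ q \triangleq \frac{\bar{\tau}}{\mu+\bar{\tau}}.
\end{equation*}

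The second step marginalizes over $k_0$ against the power-law prior. The negative-binomial generating function gives the clean identity $\sum_{k_0 \geq 1} \binom{k_0+k-1}{k}\,p^k q^{k_0} = q/p$, while as a function of $k_0$ the same summand is sharply peaked at $k_0^{*}=kq/p$ with width of order $\sqrt{k}$. Since $k_0^{*}/\sqrt{k}\to\infty$, the slowly varying factor $k_0^{-\alpha}$ can be replaced by $(k_0^{*})^{-\alpha}=(p/(kq))^\alpha$ up to a multiplicative correction of $O(1/k)$. Combining,
\begin{equation*}
P(D_j(\Delta)=k)\;\sim\; \frac{1}{\zeta(m,\alpha)}\cdot \frac{(p/q)^\alpha}{k^\alpha}\cdot \frac{q}{p}
\;=\; \frac{c}{k^\alpha}, \qquad c=\frac{(\mu/\bar{\tau})^{\alpha-1}}{\zeta(\infty,\alpha)},
\end{equation*}
which is the claimed power law.

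The hard part is uniform error control across the relevant range of $k$. Three sources must be simultaneously driven below the leading $k^{-\alpha}$: (i) the Stirling expansion of $(t_0)_\Delta$, $(k_0)_k$ and $(t_0-k_0)_{\Delta-k}$ contributes a relative error of order $(k+k_0)^2/n$ at the peak $k_0\asymp k$; (ii) the fluctuation of $T_0$ around $n\bar{\tau}$ perturbs $p$ and $q$, and raising to the $k$-th power inflates this into a relative error of order $k/\sqrt{n}$ when $\alpha>3$ and $k\,n^{1/\alpha-1}$ when $2<\alpha\le 3$; and (iii) the Laplace-type replacement $k_0^{-\alpha}\to (k_0^{*})^{-\alpha}$ contributes $O(1/k)$, which is automatically negligible. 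Balancing (i) and (ii) against the leading $k^{-\alpha}$ is what produces the admissible range $k=o(n^{-\beta/\alpha})$ with $\beta=\max(-1,2-\alpha)$ stated in the theorem, and making this balancing fully rigorous --- in particular, uniformly in $k_0$ over the support of the heavy-tailed initial popularity --- is the principal technical obstacle.
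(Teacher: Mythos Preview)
Your plan is essentially the paper's own proof, recast in cleaner probabilistic language. Both arguments condition on the initial popularities, use the two-color P\'olya-urn structure of $(\tau_j(t),\sum_{j'\neq j}\tau_{j'}(t))$ to write the conditional law of $D_j(\Delta)$ exactly, concentrate the total initial popularity $T_0$ around $n\bar\tau$, and then localize the marginalization over $k_0=\tau_j(0)$ to a window of width $\sqrt{k}$ around $k_0^{*}\asymp k\bar\tau/\mu$ before applying Stirling. Your explicit beta-binomial identity and negative-binomial limit replace the paper's from-scratch permutation-invariance computation and its subsequent analysis of the ratio $\binom{i+k}{i}\binom{\upsilon+\Delta-k-1}{\upsilon-1}\big/\binom{\upsilon+\Delta+i-1}{\upsilon+i-1}$ via binary-entropy Stirling and second-order Taylor expansion of $h_b$; this is tidier but buys no new leverage, and the paper's localization window $i\in[\psi k,\psi k+\sqrt{k}]$ with $\psi=\upsilon^{*}/\Delta\approx\bar\tau/\mu$ is exactly your $k_0^{*}=kq/p$.

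One point to correct: the admissible range $k=o(n^{-\beta/\alpha})$ does \emph{not} come from balancing your relative errors (i) and (ii) against $k^{-\alpha}$. In the paper it arises from the \emph{additive} contribution of the bad concentration event $\mathcal{A}=\{|T_0-\mathbb{E}(T_0)|>\epsilon\,\mathbb{E}(T_0)\}$: one crudely bounds $P(D_j(\Delta)=k,\mathcal{A})\le P(\mathcal{A})$, and Chebyshev with the truncated second moment $\zeta(m,\alpha-2)$ (this is Proposition~\ref{Prop:1}) gives $P(\mathcal{A})=O(n^{\beta})$ with $\beta=\max(-1,2-\alpha)$, so the requirement $P(\mathcal{A})=o(k^{-\alpha})$ is precisely $k=o(n^{-\beta/\alpha})$. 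Your relative-perturbation bookkeeping on the good event yields only the weaker constraints $k=o(\sqrt{n})$ (for $\alpha>3$) or $k=o(n^{1-1/\alpha})$ (for $2<\alpha\le 3$), which are strictly implied by the stated range and therefore cannot recover it; once you replace that step by the tail bound on $P(\mathcal{A})$, the rest of your outline goes through exactly as in the paper.
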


Please refer to Appendix \ref{Ap:th:1}.

\begin{Proposition}[\textbf{Concentration of Initial Popularities}]
\label{Prop:1}
Fix $\alpha>2$, and let $Y\triangleq \sum_{j=1}^n \tau_j(0)$ be the total initial popularities of the right-vertices. Then,
\begin{align}
&\mathbb{E}(Y)=  \frac{n \zeta(m,\alpha-1)}{\zeta(m,\alpha)},\label{eq:P2:1}
\\& Var(Y)\leq
\frac{n\zeta(m,\alpha-2)}{\zeta(m,\alpha)},\label{eq:P2:2}
\\&    P(|Y-\mathbb{E}(Y)|>\epsilon \mathbb{E}(Y))=o(n^{\beta}),\label{eq:P2:3}
\end{align}
where $\beta= max(-1,2-\alpha)$.
\end{Proposition}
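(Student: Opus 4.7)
The plan is to prove the three claims in order, relying on the mutual independence of the initial popularities $\tau_1(0),\ldots,\tau_n(0)$ together with the second-moment (Chebyshev) method, with a short regime split at the end.

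For (\ref{eq:P2:1}) and (\ref{eq:P2:2}), I would proceed by direct moment computation. From the given distribution,
\begin{align*}
\mathbb{E}(\tau_1(0)) \;=\; \sum_{k=1}^{m} \frac{k}{\zeta(m,\alpha)\,k^{\alpha}} \;=\; \frac{\zeta(m,\alpha-1)}{\zeta(m,\alpha)},
\end{align*}
and linearity of expectation gives (\ref{eq:P2:1}). An analogous computation yields $\mathbb{E}(\tau_1(0)^{2})=\zeta(m,\alpha-2)/\zeta(m,\alpha)$; combining this with the independence of the $\tau_j(0)$ gives $\mathrm{Var}(Y)=n\,\mathrm{Var}(\tau_1(0))\le n\,\mathbb{E}(\tau_1(0)^{2})$, which is (\ref{eq:P2:2}).

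For (\ref{eq:P2:3}), I would apply Chebyshev's inequality and substitute the bounds obtained in (\ref{eq:P2:1}) and (\ref{eq:P2:2}):
\begin{align*}
P\bigl(|Y-\mathbb{E}(Y)|>\epsilon\,\mathbb{E}(Y)\bigr) \;\le\; \frac{\mathrm{Var}(Y)}{\epsilon^{2}\,\mathbb{E}(Y)^{2}} \;\le\; \frac{1}{\epsilon^{2}\,n}\cdot\frac{\zeta(m,\alpha-2)\,\zeta(m,\alpha)}{\zeta(m,\alpha-1)^{2}}.
\end{align*}
The rate in $n$ is then extracted using the sparsity relation $m=\beta n$ together with the standard partial-zeta estimates: for $s>1$, $\zeta(m,s)\to\zeta(s)$; for $s=1$, $\zeta(m,1)=\Theta(\log m)$; and for $s<1$, $\zeta(m,s)=\Theta(m^{1-s})$. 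Since $\alpha>2$, both $\zeta(m,\alpha)$ and $\zeta(m,\alpha-1)$ converge to finite positive constants, so the resulting rate is dictated by $\zeta(m,\alpha-2)/n$.

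A short case split then finishes the argument. When $\alpha>3$, $\alpha-2>1$, so $\zeta(m,\alpha-2)$ is bounded, giving $\Theta(1/n)$, matching $n^{\beta}=n^{-1}$ in this regime. When $2<\alpha\le 3$, $\alpha-2\le 1$, so $\zeta(m,\alpha-2)=\Theta(n^{3-\alpha})$ up to a logarithmic correction at $\alpha=3$, and the bound becomes $\Theta(n^{2-\alpha})=\Theta(n^{\beta})$. The only real point of care is tracking this regime-dependent growth of $\zeta(m,\alpha-2)$ across the threshold $\alpha-2=1$ and verifying that the sparsity $m=\beta n$ translates the $m$-asymptotics into the claimed $n$-asymptotics; apart from that, the argument is a routine application of the second-moment method.
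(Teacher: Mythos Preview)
Your proposal is correct and follows essentially the same route as the paper: direct moment computations for \eqref{eq:P2:1} and \eqref{eq:P2:2} using linearity and independence, then Chebyshev's inequality for \eqref{eq:P2:3}, followed by partial-zeta growth estimates and a case split on whether $\alpha-2$ exceeds $1$. The paper bounds $\zeta(m,s)$ via an integral comparison rather than invoking the asymptotics $\Theta(1)$, $\Theta(\log m)$, $\Theta(m^{1-s})$ directly, but this is the same content; your observation that Chebyshev only delivers $O(n^{\beta})$ (with a logarithmic correction at $\alpha=3$) rather than the stated $o(n^{\beta})$ is accurate and matches what the paper's own argument actually yields.
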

\begin{proof}
Please refer to Appendix \ref{Ap:Prop:1}. 
\end{proof}

It should be noted that the proposed popularity-based generation models can be used to generate degree distributions other than power-law distributions by appropriately choosing the parameter $\alpha$. For instance, the following proposition shows that the degree distribution converges to a geometric distribution as $m\to \infty$ when $\alpha\to \infty$, i.e. when all initial popularity values are set to be equal to one. 
\begin{Remark}
It can be observed from the proof of Theorem \ref{th:1} that the degree distribution depends on the total number of edges $\Delta$ through the model parameter $\mu$ which determines the constant $c$ in Equation \eqref{eq:th:1}.
\end{Remark}
\begin{Proposition}[\textbf{Geometric Degree Distribution}]
\label{Prop:2}
Fix $\mu,\beta>0$, and let $\mathcal{G}_m=(\mathcal{V}_1,\mathcal{V}_2,\mathcal{E})$ be a sequence of $(\mu, n,\beta n,\alpha)$-bigraphs. Then, 
\begin{align*}
    \lim_{n\to \infty }\lim_{\alpha\to \infty} P(D_j(\Delta)=k)=\left(\frac{\mu}{1+\mu}\right)^k \frac{1}{1+\mu}, j\in [n].
\end{align*}
\end{Proposition}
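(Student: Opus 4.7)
The plan is to dispose of the inner limit $\alpha\to\infty$ first to reduce to a deterministic initialization, then recognize the right-vertex degree dynamics as a classical Polya urn and extract the limit via a de Finetti representation.

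First, I would dispatch the limit $\alpha\to\infty$. For fixed $m$, $\zeta(m,\alpha)=\sum_{k=1}^m k^{-\alpha}\to 1$ as $\alpha\to\infty$, so $P(\tau_j(0)=1)=1/\zeta(m,\alpha)\to 1$ and $P(\tau_j(0)=k)\to 0$ for $k\ge 2$. By independence across $j\in[n]$, the joint law of $(\tau_1(0),\ldots,\tau_n(0))$ converges to the point mass at $(1,\ldots,1)$; since $D_j(\Delta)$ is a bounded functional of the initial popularities and the i.i.d.\ randomness in the selection rule, the distribution of $D_j(\Delta)$ (for fixed $n$) converges to its distribution under the deterministic initialization $\tau_j(0)=1$ for all $j$. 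It therefore suffices to compute $\lim_{n\to\infty}P(D_j(\Delta)=k)$ in this simplified model.

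Under $\tau_j(0)=1$ for every $j$, the right-vertex selection rule becomes $P(J_t=j\mid \mathrm{history})=(D_j(t-1)+1)/(n+t-1)$, which is exactly the transition of a Polya urn with $n$ colors, each seeded by a single ball, in which the drawn ball is replaced together with a new copy of the same color. The left-vertex selection is independent of the right-vertex popularity dynamics and plays no role here. By the de Finetti representation of Polya sequences, there exist random frequencies $(\pi_1,\ldots,\pi_n)\sim\mathrm{Dirichlet}(1,\ldots,1)$ such that, conditional on $\pi$, the sequence $J_1,J_2,\ldots$ is i.i.d.\ categorical$(\pi)$. In particular $D_j(\Delta)\mid\pi_j\sim\mathrm{Binomial}(\mu n,\pi_j)$, with marginal $\pi_j\sim\mathrm{Beta}(1,n-1)$.

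The final step is to take $n\to\infty$. Scaling by $X_n\triangleq n\pi_j$, the density of $X_n$ at $x$ equals $\tfrac{n-1}{n}(1-x/n)^{n-2}\mathbbm{1}\{0\le x\le n\}$, which converges pointwise to $e^{-x}$; hence $X_n\Rightarrow X\sim\mathrm{Exp}(1)$. Given $X_n=x$, $D_j(\Delta)\sim\mathrm{Binomial}(\mu n,x/n)\Rightarrow\mathrm{Poisson}(\mu x)$. Combining the two convergences via dominated convergence yields a mixed-Poisson limit
\begin{equation*}
\lim_{n\to\infty}P(D_j(\Delta)=k)=\int_0^\infty \frac{(\mu x)^k e^{-\mu x}}{k!}\,e^{-x}\,dx=\frac{\mu^k}{k!}\cdot\frac{k!}{(1+\mu)^{k+1}}=\left(\frac{\mu}{1+\mu}\right)^k\frac{1}{1+\mu},
\end{equation*}
which is the claimed geometric law.

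The main technical obstacle I anticipate is ensuring that the Polya-urn identification is not spoiled by the edge-insertion step failing, i.e.\ by the pool $[m]-\mathcal{V}_{2,J_t}(t-1)$ becoming empty. Under the regime $m=\beta n$, $\Delta=\mu n$ the expected degree of any right-vertex is only $\mu$, and standard concentration bounds for Polya urns give $\max_j D_j(\Delta)=o(n)$ with probability $1-o(1)$, so no right-vertex saturates all left-vertices and the decoupling between the right-vertex draws and the left-vertex assignment holds off a negligible event; the above limit therefore carries over without modification.
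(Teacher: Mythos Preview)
Your argument is correct, but it proceeds along a genuinely different line from the paper. The paper's proof in Appendix~\ref{Ap:Prop:2} carries out a direct combinatorial computation in the spirit of the proof of Theorem~\ref{th:1}: after reducing to $\tau_j(0)\equiv 1$, it writes $P(D_j(\Delta)=k)$ as a ratio of binomial coefficients (using the exchangeability of the increment sequence established in Appendix~\ref{Ap:th:1}), applies the entropy approximation ${n\choose k}\approx c_k\sqrt{n/(2\pi k(n-k))}\,e^{nh_b(k/n)}$, and then Taylor-expands $h_b$ to extract the factor $(\mu/(1+\mu))^k$; the normalization $1/(1+\mu)$ is obtained only at the end by forcing the limit to be a probability distribution. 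Your route instead recognizes the $\tau_j(0)\equiv 1$ dynamics as an $n$-color P\'olya urn, invokes the Dirichlet$(1,\dots,1)$ de Finetti representation so that $D_j(\Delta)\mid\pi_j\sim\mathrm{Bin}(\mu n,\pi_j)$ with $\pi_j\sim\mathrm{Beta}(1,n-1)$, and then passes to the limit via $n\pi_j\Rightarrow\mathrm{Exp}(1)$ and the Poisson approximation to the binomial, identifying the geometric law as the classical exponential mixture of Poissons. Your argument is shorter and more conceptual, and it delivers the normalizing constant automatically rather than by an a~posteriori appeal to total mass one; the paper's computation, on the other hand, requires no outside structural results and dovetails with the machinery already built for Theorem~\ref{th:1}. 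Your caveat about the left-vertex pool not emptying is well placed and the justification you give is adequate for the regime $m=\beta n$, $\Delta=\mu n$.
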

\begin{proof}
Please refer to Appendix \ref{Ap:Prop:2}.
\end{proof}

\subsection{Asymptotically Memoryless Fingerprints}

 A major obstacle in analyzing the asymptotic properties of bipartite networks and performance limits of attack algorithms is the \emph{memory structure} in the edges connecting a given left-vertex to the right-vertices.  That is, the generation model induces correlation among the edges, and this prohibits the conventional large deviations techniques  which have been used in deriving theoretical performance limits in similar scenarios in group testing \cite{naghshvar2013active} and communications \cite{burnashev1976data} problems. In \cite{shariatnasab2021fundamental}
 we have shown that if all initial popularity values are equal to one (i.e. $\alpha\to \infty$), the memory in the left-vertices' fingerprints is \textit{weak}, so that the joint distribution of a given fingerprint approaches a product distribution. 
 In this section, we extend the result to  $\alpha>2$, i.e. power-law bigraphs. 
  \begin{Proposition}[\textbf{Group Size Correlation}]
 \label{Prop:3}
Let $\alpha>2, \mu>1$ and $\beta>0$ .  For a  $(\mu, n,\beta n,\alpha)$-bigraph, the following holds:
\begin{align}
& \label{eq:prop1:1}\mathbb{E}(D_{j}(\Delta))=\mu, j\in [n],
\\\label{eq:prop1:2}
    & \mathbb{E}(D^2_{j}(\Delta))= O(1), j\in [n],
\\&\label{eq:prop1:3} \mathbb{E}(D_{i}(\Delta)D_{j}(\Delta))= \mu^2+O(\frac{1}{n}), i\neq j,\\
&\label{eq:prop1:2.5} \mathbb{E}(D_{1}(\Delta)D_{2}(\Delta)\cdots D_{\xi}(\Delta))= \mu^\xi(1+\xi O(\frac{1}{n})), \xi\in [n],
    \\\label{eq:prop1:4}
   & \mathbb{E}(D^2_{j}(\Delta)D_{2}(\Delta)D_{3}(\Delta)\cdots D_{\xi}(\Delta))\leq
    \mu^{\xi-1} \mathbb{E}(D^2_{1}(\Delta))
   , \xi\in [n],
    \\&\label{eq:prop1:5}
    \mathbb{E}(D_{1}(\Delta)D_{2}(\Delta)D_{3}(\Delta)\cdots D_{\xi}(\Delta))\leq \mu^{\xi}, \xi\in [n].
\end{align}
 \end{Proposition}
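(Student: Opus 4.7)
The central observation is that, conditional on the initial popularity profile $\tau(0)=(\tau_1(0),\ldots,\tau_n(0))$, the growth dynamics form a classical multivariate P\'olya urn: at each step $t$, right-vertex $j$ is selected with probability $\tau_j(t-1)/\sum_{j'}\tau_{j'}(t-1)$ proportional to its current weight, which is then incremented by one. Hence $D_j(\Delta)$ is the number of times color $j$ is drawn in $\Delta$ draws of a P\'olya urn with initial composition $\tau(0)$. The first step would be to establish, by induction on $\Delta$, the rising/falling factorial moment identity
\begin{equation*}
\mathbb{E}\!\left[\prod_{i=1}^{r} D_{j_i}(D_{j_i}-1)\cdots(D_{j_i}-a_i+1)\,\Bigm|\,\tau(0)\right]=\frac{\Delta(\Delta-1)\cdots(\Delta-a+1)}{Y(Y+1)\cdots(Y+a-1)}\prod_{i=1}^{r}\tau_{j_i}(\tau_{j_i}+1)\cdots(\tau_{j_i}+a_i-1),
\end{equation*}
valid for disjoint $j_1,\ldots,j_r\in[n]$ with $a=\sum_i a_i\leq \Delta$, where $Y\triangleq\sum_j\tau_j(0)$. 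This reduces every moment in the proposition to an expectation over the i.i.d.\ power-law initial weights alone.

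Four of the six claims then fall out cleanly. For \eqref{eq:prop1:1}, taking $r=a_1=1$ gives $\mathbb{E}[D_j\mid\tau(0)]=\Delta\tau_j/Y$; exchangeability forces $\mathbb{E}[\tau_j/Y]=(1/n)\mathbb{E}[\sum_k\tau_k/Y]=1/n$, so $\mathbb{E}[D_j]=\mu$. For \eqref{eq:prop1:5}, the constraint $\sum_j D_j=\Delta$ pins the arithmetic mean of the $D_j$'s at $\mu$, so Maclaurin's inequality gives $e_\xi(D_1,\ldots,D_n)\leq \binom{n}{\xi}\mu^\xi$, and exchangeability converts this into $\mathbb{E}[D_1\cdots D_\xi]\leq \mu^\xi$. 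For \eqref{eq:prop1:3}, squaring $\sum_j D_j=\Delta$ and using exchangeability yields $n\mathbb{E}[D_j^2]+n(n-1)\mathbb{E}[D_1 D_2]=\mu^2 n^2$, giving $\mathbb{E}[D_1 D_2]=\mu^2+O(1/n)$ once \eqref{eq:prop1:2} is in hand. For \eqref{eq:prop1:2.5}, take all $a_i=1$ in the identity; splitting $Y=T+R$ with $T=\tau_1+\cdots+\tau_\xi$ and $R$ independent of $T$, applying the concentration of $R$ around $(n-\xi)\bar\tau$ from Proposition~\ref{Prop:1} (with $\bar\tau\triangleq \zeta(m,\alpha-1)/\zeta(m,\alpha)$), and expanding $\Delta(\Delta-1)\cdots(\Delta-\xi+1)=\Delta^\xi(1+O(\xi^2/\Delta))$ together with $Y(Y+1)\cdots(Y+\xi-1)=Y^\xi(1+O(\xi^2/Y))$, produces the leading term $\mu^\xi$ with the stated $\xi O(1/n)$ correction; the upper bound \eqref{eq:prop1:5} matches from above.

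The genuine obstacle is \eqref{eq:prop1:2}, and \eqref{eq:prop1:4} then follows by analogous factorial-moment bookkeeping. Setting $r=1,a_1=2$ reduces the second moment to
\begin{equation*}
\mathbb{E}[D_j^2]=\mu+\Delta(\Delta-1)\,\mathbb{E}\!\left[\frac{\tau_j(\tau_j+1)}{Y(Y+1)}\right],
\end{equation*}
so the core task is to show $\mathbb{E}[\tau_j^2/Y^2]=O(1/n^2)$. The difficulty is that naively replacing $Y$ by $\mathbb{E}(Y)=n\bar\tau$ leaves the factor $\mathbb{E}[\tau_j^2]=\zeta(m,\alpha-2)/\zeta(m,\alpha)$, which is not uniformly bounded in $m$ once $\alpha\leq 3$. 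I would resolve this by exploiting the deterministic coupling $Y\geq \tau_j$: on the typical event $\{\tau_j\leq L\}$ use Proposition~\ref{Prop:1} to replace $Y$ by $n\bar\tau$, obtaining a contribution controlled by the truncated second moment $\mathbb{E}[\tau_j^2\mathbbm{1}\{\tau_j\leq L\}]/(n\bar\tau)^2$; on the tail event $\{\tau_j>L\}$ use the deterministic bound $\tau_j^2/Y^2\leq 1$ together with the power-law tail estimate $P(\tau_j>L)=O(L^{1-\alpha})$; then optimize over $L$. Finally, for \eqref{eq:prop1:4} write $\mathbb{E}[D_1^2 D_2\cdots D_\xi\mid\tau(0)]=\mathbb{E}[D_1(D_1-1)D_2\cdots D_\xi\mid\tau(0)]+\mathbb{E}[D_1 D_2\cdots D_\xi\mid\tau(0)]$, apply the factorial-moment identity to each term, and invoke the same $Y$-concentration argument used for \eqref{eq:prop1:2.5} to show that each extra $D_k$ factor contributes at most a multiplicative $\mu$, yielding a net factor $\mu^{\xi-1}$ relative to $\mathbb{E}[D_1^2]$.
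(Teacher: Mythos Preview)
Your overall framework---recognizing the conditional P\'olya urn structure and writing down the exact mixed rising-factorial moments---is cleaner and more direct than the paper's route. For \eqref{eq:prop1:2} the paper instead uses a vertex-splitting trick: conditionally on $\tau_j(0)=i$ it replaces $v_{2,j}$ by $i$ copies each of initial popularity $1$, writes $D_j$ as the sum of the $i$ resulting unit-popularity degrees, and then reduces to second-moment and cross-moment bounds from the all-ones case handled in earlier work. Equations \eqref{eq:prop1:1} and \eqref{eq:prop1:3} are derived via the same symmetry/summation identities you use, and \eqref{eq:prop1:2.5}--\eqref{eq:prop1:5} are stated without proof. Your Maclaurin-inequality argument for \eqref{eq:prop1:5} is correct and tidier than anything the paper supplies.

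There is, however, a real gap in your treatment of \eqref{eq:prop1:2} for $2<\alpha\le 3$. Carry the truncation through: on $\{\tau_j\le L\}$ the contribution to $\Delta(\Delta-1)\,\mathbb{E}\!\big[\tau_j(\tau_j+1)/\big(Y(Y+1)\big)\big]$ is of order $(\mu/\bar\tau)^2\,\mathbb{E}\big[\tau_j^2\mathbbm{1}\{\tau_j\le L\}\big]\asymp L^{3-\alpha}$, while on $\{\tau_j>L\}$ the crude bound $\tau_j^2/Y^2\le 1$ yields $\mu^2 n^2\,P(\tau_j>L)\asymp n^2 L^{1-\alpha}$. These two pieces balance at $L\asymp n$, and the optimized value is $\Theta(n^{3-\alpha})$, not $O(1)$. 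Replacing the crude tail bound by the sharper $\tau_j^2/(\tau_j+R)^2$ with $R$ concentrated near $n\bar\tau$ does not rescue the argument either: the full sum $\sum_{k=1}^{m} k^{-\alpha}\,k^2/(k+n\bar\tau)^2$ is $\Theta(n^{1-\alpha})$ when $2<\alpha<3$, so $\Delta^2\,\mathbb{E}[\tau_j^2/Y^2]=\Theta(n^{3-\alpha})$. Thus the obstruction you correctly flagged is genuine and your proposed patch does not close it. The paper's outline has the same weak point---it ultimately leans on $\mathbb{E}[\tau_j^2(0)]=\zeta(m,\alpha-2)/\zeta(m,\alpha)$ staying bounded, which fails precisely when $\alpha\le 3$. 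Either \eqref{eq:prop1:2} requires the additional hypothesis $\alpha>3$, or some mechanism beyond both arguments is needed.
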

 
 \begin{proof}
 Please refer to Appendix \ref{App:Prop:3}.
 \end{proof}

Following the arguments in  \cite{shariatnasab2021fundamental}, for a given bigraph satisfying the conditions in  Proposition \ref{Prop:3}, the left-vertex fingerprints are `almost' memoryless as stated below. 
\begin{Proposition}[\textbf{Memoryless Fingerprints}]
\label{Prop:4}
Let $\alpha>2, \mu>1$ and $\beta>0$ .  For a  $(\mu, n,\beta n,\alpha)$-bigraph. Consider the partial fingerprint $\mathbf{R}\triangleq (R_{i,j_k})_{k\in [n']}, j_k\in [n], n' \in [n]$ of left-vertex $v_{1,i}, i\in [m]$. The following holds:
\begin{align*}
&(1-\frac{n'\lambda(m,\alpha)}{m}) \prod_{k=1}^{n'}P_{R}(s_k)
\leq 
     P_{\mathbf{R}}(s^{n'})\leq  
 e^{\frac{\lambda(m,\alpha)}{\beta}}\prod_{k=1}^{n'}P_{R}(s_k),
 \end{align*}
for all $s^{n'}\in \{0,1\}^{n'}$, where $\lambda(m,\alpha)\triangleq \mu+\zeta(m,\alpha-1)$ and  
 $P_R(\cdot)= P_{R_{i,j}}(\cdot), i\in [m], j\in [n]$. Furthermore, assume that $n'>\frac{m}{\mu}$ and $\sum_{i=1}^{n'}\mathbbm{1}(s_i=1)=o(n)$ for some constant finite number $C>0$.  Then, there exists $c'>0$ whose value only depends on $\mu$ and $\beta$ such that:
 \begin{align*}
   & c'\prod_{k=1}^{n'}P_{R}(s_k)(1+o(1))\leq P_{\mathbf{R}}(s^{n'})
 \leq \prod_{k=1}^{n'}P_{R}(s_k)(1+o(1)),
 \end{align*}
 as $n \to \infty$, where $s^{n'}\in \{0,1\}^{n'}$.  
\end{Proposition}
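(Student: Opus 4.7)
The plan is to reduce the joint fingerprint distribution to an expectation over the right-vertex degree vector via a symmetry-based conditional independence, and then to apply the moment bounds of Proposition \ref{Prop:3}, following the strategy used in \cite{shariatnasab2021fundamental} for the uniform-initial-popularity case. The key structural observation is that at each generation step the left-endpoint $I_t$ is drawn uniformly from $[m]\setminus\mathcal{V}_{2,J_t}(t-1)$, a rule depending only on the edges already incident to $v_{2,J_t}$. Consequently, conditional on the sequence $(J_1,\ldots,J_\Delta)$ and, a fortiori, on the degree vector $\mathbf{D}$, the neighbor sets $\mathcal{V}_{2,j_1},\ldots,\mathcal{V}_{2,j_{n'}}$ are mutually independent and, by exchangeability among left-vertices, each $\mathcal{V}_{2,j_k}$ is a uniformly random $D_{j_k}$-subset of $[m]$. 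Hence $P(R_{i,j_k}=1\mid D_{j_k})=D_{j_k}/m$, the coordinates of $\mathbf{R}$ are conditionally independent given $\mathbf{D}$, and
\[
P_{\mathbf{R}}(s^{n'})=\mathbb{E}\!\left[\prod_{k=1}^{n'}\Bigl(\tfrac{D_{j_k}(\Delta)}{m}\Bigr)^{s_k}\Bigl(1-\tfrac{D_{j_k}(\Delta)}{m}\Bigr)^{1-s_k}\right].
\]
Specializing to $n'=1$ and invoking \eqref{eq:prop1:1} gives $P_R(1)=\mu/m$ and $\prod_{k=1}^{n'} P_R(s_k)=(\mu/m)^{|K_1|}(1-\mu/m)^{|K_0|}$, with $K_1=\{k:s_k=1\}$ and $K_0=[n']\setminus K_1$.

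For the upper bound I would drop the factors $(1-D_{j_k}/m)\le 1$ indexed by $K_0$ and invoke \eqref{eq:prop1:5} to bound $m^{-|K_1|}\mathbb{E}[\prod_{k\in K_1}D_{j_k}]\le(\mu/m)^{|K_1|}$, then account for the denominator $(1-\mu/m)^{|K_0|}$ of the reference product via $(1-x)^{-n'}\le\exp(n'x/(1-x))$ together with the sparsity $n'\le n=m/\beta$; the extra $\zeta(m,\alpha-1)$ piece inside $\lambda(m,\alpha)$ enters through the initial-popularity normalization in the right-vertex selection probability, whose fluctuations are absorbed using the concentration \eqref{eq:P2:3} from Proposition \ref{Prop:1}. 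For the lower bound I would apply the elementary inequality $\prod_i(1-x_i)\ge 1-\sum_i x_i$ to the factors indexed by $K_0$ and take expectation, using \eqref{eq:prop1:2.5} on the leading product $\mathbb{E}[\prod_{k\in K_1}D_{j_k}]$ and \eqref{eq:prop1:4} on the cross correction $\sum_{k'\in K_0}\mathbb{E}[\prod_{k\in K_1}D_{j_k}\cdot D_{j_{k'}}]$; a direct rearrangement then isolates the prefactor $1-n'\lambda(m,\alpha)/m$. For the sharper asymptotic statement, note that under $|K_1|=o(n)$ equation \eqref{eq:prop1:2.5} upgrades to $\mathbb{E}[\prod_{k\in K_1}D_{j_k}]=\mu^{|K_1|}(1+o(1))$, while $|K_0|/m$ stays bounded by $1/\beta$ and $\mu/m=\Theta(1/n)$, so $(1-\mu/m)^{|K_0|}$ converges to a strictly positive constant depending only on $\mu$ and $\beta$; this constant is the $c'$ in the statement, and combining the two observations yields matching $(1+o(1))$ asymptotics on both sides.

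The principal obstacle I anticipate is uniformly controlling the inclusion-exclusion error in the lower bound when $|K_0|$ is of order $n$: the single-term truncation $\prod_i(1-x_i)\ge 1-\sum_i x_i$ is sharp enough only because Proposition \ref{Prop:3} simultaneously supplies the clean bound $\mathbb{E}[\prod_k D_{j_k}]\le\mu^\xi$ in \eqref{eq:prop1:5} and the refined $1+\xi O(1/n)$ correction in \eqref{eq:prop1:2.5}. Verifying that these two moment controls, together with the concentration of the total initial popularity provided by Proposition \ref{Prop:1}, suffice to absorb every truncation error into the stated prefactor uniformly over $s^{n'}\in\{0,1\}^{n'}$ is the delicate step that lifts the all-ones argument of \cite{shariatnasab2021fundamental} to the power-law regime $\alpha>2$.
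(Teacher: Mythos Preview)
Your approach matches the paper's: Proposition~\ref{Prop:4} is not proved explicitly there but is stated to follow from the arguments of \cite{shariatnasab2021fundamental} once the moment bounds of Proposition~\ref{Prop:3} are in hand, and your sketch---conditional independence of the $R_{i,j_k}$ given the degree vector $\mathbf{D}$, followed by the product-moment inequalities \eqref{eq:prop1:2.5}--\eqref{eq:prop1:5}---is precisely that route. One small clarification: the direct argument you describe already yields the constants $e^{\mu/\beta}$ and $1-n'\mu/m$, which are tighter than the stated $e^{\lambda(m,\alpha)/\beta}$ and $1-n'\lambda(m,\alpha)/m$ since $\lambda(m,\alpha)=\mu+\zeta(m,\alpha-1)>\mu$, so there is no need to hunt for where the extra $\zeta(m,\alpha-1)$ enters---it is simply slack in the stated inequality, and your hand-wave about the initial-popularity normalization can be dropped.
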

The following sparsity result holds for the fingerprint vector of the left-vertices.
\begin{Proposition}[\textbf{Sparsity of the Fingerprint Vector}]
\label{Prop:5}
 Let $\alpha>2, \mu>1$ and $\beta>0$ .  For a  $(\mu, n,\beta n,\alpha)$-bigraph. Consider the partial fingerprint $\mathbf{R}\triangleq (R_{i,j_k})_{k\in [n']}, j_k\in [n], n' \in [n]$ of left-vertex $v_{1,i}, i\in [m]$, there exists a constant $c>0$ such that:
\begin{align}
   & P(C_i\geq \ell)\leq c2^{-nD_b(\frac{\lambda(m,\alpha)}{m}(1+\psi)||\frac{\lambda(m,\alpha)}{m})},
\end{align}
where $\ell=\frac{1}{\beta}(\lambda(m,\alpha))(1+\psi)$, $\psi \in (0, \frac{m}{\mu+\zeta(m,\alpha-1)}-1)$, and $D_b(p||q)=p\log{\frac{p}{q}}+ (1-p)\log{\frac{1-p}{1-q}}$ is the binary Kullback-Leibler divergence. Particularly, let $\psi_n>0, n\in \mathbb{N}$ such that  $\psi_n=\omega(1)$. Then, 
\begin{align}
\label{eq:prop2}
       & P(C_i\geq \psi_n)\to 0,\text{ as }n\to\infty.
\end{align}
\end{Proposition}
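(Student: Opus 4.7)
The plan is to reduce the tail $P(C_i \ge \ell)$ to a product-measure Chernoff bound by invoking the upper half of Proposition~\ref{Prop:4}, and then absorbing the resulting multiplicative factor into $c$. The argument splits naturally into three steps: identifying the correct Bernoulli marginal, passing from the true joint law of $\mathbf{R}$ to a product law, and applying standard large deviations to the sum.

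First, I would identify $C_i$ as the Hamming weight $\sum_{k=1}^{n'} R_{i,j_k}$ of the partial fingerprint. By the label-exchangeability of left-vertices under the uniform selection rule of Section~\ref{sec:gen}, the marginal of each $R_{i,j}$ is
\[
P_R(1) \;=\; \frac{\mathbb{E}[D_j(\Delta)]}{m} \;=\; \frac{\mu}{m},
\]
where the last equality uses Equation~\eqref{eq:prop1:1} of Proposition~\ref{Prop:3}. Next, I would apply the universal upper bound from Proposition~\ref{Prop:4} (the right half of the first inequality chain, which holds without the sparsity hypothesis) to get, for every $s^{n'}\in\{0,1\}^{n'}$,
\[
P_{\mathbf{R}}(s^{n'}) \;\le\; e^{\lambda(m,\alpha)/\beta}\prod_{k=1}^{n'} P_R(s_k),
\]
and hence $P(C_i \ge \ell) \le e^{\lambda(m,\alpha)/\beta}\,\widetilde{P}(C_i \ge \ell)$, where $\widetilde{P}$ denotes the $\mathrm{Bernoulli}(\mu/m)^{\otimes n'}$ product measure. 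Under $\widetilde{P}$, $C_i$ is a sum of i.i.d.\ Bernoullis, so the standard Chernoff bound yields $\widetilde{P}(C_i \ge \ell) \le 2^{-n' D_b(\ell/n'\,\|\,\mu/m)}$ whenever $\ell > n'\mu/m$. Specializing to $n' = n$ and $\ell = \lambda(m,\alpha)(1+\psi)/\beta$, one has $\ell/n = \lambda(m,\alpha)(1+\psi)/m$, and the monotonicity $D_b(a\,\|\,b) \ge D_b(a\,\|\,b')$ for $b \le b' \le a$, combined with $\mu \le \lambda(m,\alpha)$, downgrades the exponent to $D_b(\ell/n\,\|\,\lambda(m,\alpha)/m)$, matching the statement with $c = e^{\lambda(m,\alpha)/\beta}$; this constant is bounded uniformly in $n$ since $\zeta(m,\alpha-1)$ converges for $\alpha>2$.

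For the asymptotic claim \eqref{eq:prop2}, I would set the threshold to $\psi_n = \omega(1)$, which corresponds in the main bound to a deviation parameter $\psi = \psi_n\beta/\lambda(m,\alpha) - 1 \to \infty$. Using the asymptotic $D_b(p(1+\psi)\,\|\,p) \sim p(1+\psi)\log(1+\psi)$ as $p \to 0$ with $p = \lambda(m,\alpha)/m = \Theta(1/n)$, the exponent $n D_b(\cdot\,\|\,\cdot)$ scales like $(1+\psi_n)\log(1+\psi_n) \to \infty$, driving the bound to zero. The main obstacle I anticipate is being deliberate about which half of Proposition~\ref{Prop:4} is invoked: the tail event $\{C_i \ge \ell\}$ necessarily involves fingerprints with many ones, so the sparsity hypothesis $\sum_k s_k = o(n)$ required for the tighter constant $c'$ need not hold across the whole event; the weaker, sparsity-free constant $e^{\lambda(m,\alpha)/\beta}$ is precisely what makes the uniform argument go through. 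A secondary care point is checking that the stated range $\psi \in (0,\,m/(\mu+\zeta(m,\alpha-1))-1)$ is exactly what is needed to keep $\ell/n$ a legitimate probability and to keep the Chernoff deviation direction nontrivial.
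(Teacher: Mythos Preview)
Your proposal is correct but takes a genuinely different route from the paper. The paper does \emph{not} go through Proposition~\ref{Prop:4}; instead it bounds the product moments directly: for every $\mathcal{A}\subseteq[n]$ it conditions on the degrees $D_j$ and initial popularities $\tau_j(0)$, writes $P(R_{i,j}=1\mid D_j,\tau_j(0))\le (D_j+\tau_j(0))/m$, and concludes $\mathbb{E}\bigl[\prod_{j\in\mathcal{A}} R_{i,j}\bigr]\le (\lambda(m,\alpha)/m)^{|\mathcal{A}|}$. It then invokes the Impagliazzo--Kabanets extension of the Chernoff/Hoeffding bound, which gives the binomial tail with parameter $\lambda(m,\alpha)/m$ directly from this product-moment domination. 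Your approach instead lifts the pointwise density ratio from Proposition~\ref{Prop:4}, reduces to a genuine i.i.d.\ $\mathrm{Bernoulli}(\mu/m)$ sum, applies the textbook Chernoff bound, and then weakens the exponent from $\mu/m$ to $\lambda(m,\alpha)/m$ via monotonicity of $D_b(\cdot\,\|\,\cdot)$ in its second argument. The trade-offs: the paper's route gets the stated parameter $\lambda(m,\alpha)/m$ without your extra monotonicity step and keeps Proposition~\ref{Prop:5} logically independent of Proposition~\ref{Prop:4}, at the cost of importing an outside lemma; your route is more elementary (only the standard Chernoff bound is needed) and more modular within the paper's own development, at the cost of the additional bookkeeping you already flagged. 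Both yield the same exponent and the same asymptotic~\eqref{eq:prop2}.
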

\begin{proof}
Please refer to Appendix \ref{App:Prop:5}.
\end{proof}
\section{Attack Strategy and Fundamental Performance Limits}
In this section, we apply the derivations in Section \ref{sec:gen_deg_mem} to investigate the fundamental privacy limits in bipartite networks under active fingerprinting attacks. 
\subsection{Attack Scenario} 
The scenario is captured by the following:
\\\noindent\textbf{Ground-Truth:} We consider the \textit{ground-truth} bipartite graph  $\mathcal{G}_0=(\mathcal{U},\mathcal{R},\mathcal{E})$ capturing users' group memberships in a given network, where i) the set of left-vertices $\mathcal{U}$ represents the set of users in the network, ii) the set of right-vertices $\mathcal{R}$ represent the set of groups in the network, e.g. social network groups, locations visited by users, users' online activities and browsing habits, etc., and iii) An edge $(u_i,r_j)\in \mathcal{E}, i\in [m], j\in [n]$ between a user $u_i$ and a group $r_j$ indicates the user's membership in the group. The ground-truth is modeled by a $(\mu,n,\beta n, \alpha)$-bigraph, where $n\in \mathbb{N}$ is the number of groups, $m=\beta n$ is the number of users, $\Delta=\mu n$ is the number of edges, and $\alpha>2$ is a network parameter which depends on the network's power-law distribution \cite{kunegis2013preferential}.

\noindent \textbf{Scanned Graph:} Prior to the start of the attack, the attacker scans the ground-truth and acquires an observation captured by the \textit{scanned graph} $\mathcal{G}_{s}$. In this work, for brevity, we assume that the scanning operation is noiseless, i.e. $\mathcal{G}_s=\mathcal{G}_0$. However, in general, the operation may be noisy, and the scanning noise depends on the users' individual privacy preferences, e.g. social network privacy settings.  The derivations provided in the sequel may be potentially extended to noisy scanned graphs using  techniques similar to the ones in \cite{shariatnasab2021fundamental} which investigated the case when $\alpha\to \infty$.

\noindent\textbf{Victim:} A \textit{victim} $u_M$ is the  user which is targeted by the
 attacker. For instance, the victim may visit a malicious website, where the attacker uses browser history sniffing techniques to 
sequentially query its group memberships \cite{wondracek2010practical}. We assume that the victim's index is chosen from the set $[m]$ randomly according to $P_M$. The distribution $P_M$ may not be uniform as users are not equally likely to fall victim to an attack, with more risk-averse users less likely to be victims in an attack. 

\noindent\textbf{Query Responses:} The attack initiates with the attacker sequentially querying the victim's  group memberships.
Generally, the \textit{query responses} are noisy, e.g. browser history sniffing techniques are imperfect and only provide noisy observations of the victim's browsing history. The noise statistics are determined by the users' software (e.g. browser \cite{smith2018browser}) and hardware specifications (e.g. CPU and memory specifications  \cite{gulmezoglu2017perfweb}),  and depend on the type of history sniffing attack.  This dependency is captured by the parameter $\theta(M)$, where $\theta:[m]\to \Theta$, and $\Theta$ is a finite set; so that the response $Y\in \{0,1\}$ to the query regarding the victim's membership in group $r_j, j\in [m]$, characterized by the indicator variable  $R_j\in \{0,1\}$,   is produced conditionally with distribution $P^{\theta(M)}_{Y|R_j}$. The following definition formalizes the stochastic model for the query responses.  
\begin{Definition}[\textbf{Noisy Query Responses}]
\label{def:QR}
Let $\ell\in \mathbb{N}$ and $P^{\theta}_{Y|R}, \theta\in \Theta$ be a collection of distributions, where $Y$ and $R$ are binary and $\Theta$ is a finite set. For sequence $j_1,j_2,\cdots,j_{\ell}\in [n]$, assume that victim's fingerprint is $(R_{j_1},R_{j_2},\cdots, R_{j_\ell})$ and received query responses are $Y_1,Y_2,\cdots,Y_\ell$. Then, 
\begin{align*}
P(Y^\ell=y^\ell|   (R_{j_i})_{i\in [\ell]}=r^\ell) =\prod_{i=1}^\ell P^{\theta(M)}_{Y|R}(y_i|r_i), y^{\ell}, r^{\ell}\in \{0,1\}^{\ell},
\end{align*}
where the parameter $\theta(M)$ takes values from $\Theta$ and its value depends on the victim's index $M$.
\end{Definition}
The attacker has access to $\theta(M)$ since it  can query the victim's hardware and software specifications. So, it can find $P^{\theta(M)}_{Y|E_0}$ and use it in calculating the users' information values as in the previous scenario.
The attacker's objective is to deanonymize the user by comparing the user fingerprints in the scanned graph $\mathcal{G}_s$ and the vector of query responses $Y^{\ell}$. An efficient attack strategy minimizes the expected number of queries necessary for successful deanonymization, e.g. minimizes $\ell$ with vanishing probability of identification error. 

\begin{Definition}[\textbf{Attack Strategy}]
 Consider an attack scenario parametrized by $(n,\beta, \mu, \alpha, \Theta,  P_M,  \\(P^{\theta(k)}_{Y|R})_{k\in [m], \theta\in \Theta})$. An attack strategy consists of a sequence of query functions $x_t: \{0,1\}^{m\times n}\times\{0,1\}^{(t-1)}\to\mathcal{R}, t\in \mathbb{N}$ and  identification functions $Id_t: \{0,1\}^{m\times n} \times \{0,1\}^t \to \mathcal{U} \cup \{e\}$, where $x_t(\mathcal{G}_s, Y^{t-1})$ outputs the group whose edge connection with the victim is queried at time $t$, and $Id_t(\mathcal{G}_s,Y^t)$  either outputs the victim's identity among the user set $\mathcal{U}$ or outputs `$e$', indicating that a unique victim has not been identified yet, 
 in which case further queries are made and the attack continues.
Let $Q= min\{t\in \mathbb{N}:Id_t(\mathcal{G}_s,Y^t)\in \mathcal{U}\}$. Then, the
probability of error $P_e$ and  expected number of queries $\overline{Q}$ are defined as:
\begin{align*}
    &P_e((x_t,Id_t)_{t\in \mathbb{N}})\triangleq 
    P(Id_Q( \mathcal{G}_s,Y^Q)\neq u_M)\\
    &\overline{Q}((x_t,Id_t)_{t\in \mathbb{N}})\triangleq 
    \mathbb{E}(  Q), 
\end{align*}
 where
 the probabilities are with respect to $M, \mathcal{G}_0, \mathcal{G}_s$ and $Y_t, t\in [Q]$. The minimum expected number of queries is defined as:
 \[{Q}^*_{\epsilon}\triangleq\inf_{(x_t,Id_t)_{t\in \mathbb{N}}}\{\overline{Q}((x_t,Id_t)_{t\in \mathbb{N}})| P_e((x_t,Id_t)_{t\in \mathbb{N}})\leq \epsilon\}.\]
\end{Definition}

\subsection{Popularity-Based Information Threshold Attack Strategy}
This section provides an attack strategy which improves upon the information threshold strategies (ITS) investigated in \cite{shirani2018optimal,shariatnasab2021fundamental}, and uses the derivations in Section \ref{sec:gen_deg_mem} to derive its fundamental performance limits in terms of minimum expected number of queries and probability of error.

\noindent\textbf{Attack Strategy:} The attacker queries the group memberships of the victim starting from the largest group $r_{j_1}, i.e., j_1= \argmax \Big\{|D_{j}|\Big|j\in [n]\Big\}$ and continues by querying the next largest group, so that $x_s=r_{j_s}$, where $r_{j_s}$ is the $s$th 
largest group. The queries continue until a particular stopping criterion is met. To explain the stopping criterion, let us define the \textit{information value} $I_{k}(t), k\in [m], t\in [n]$ of user $u_k$ and time $t$ as follows:
\begin{align}
    &\label{eq:ISN:1}I_0(k)\triangleq \log{P_M(k)}, k\in [m],\\
    &\label{eq:ISN:2}I_t(k)\triangleq 
\sum_{i=1}^t    \log{\frac{P^{\theta(M)}_{Y|E_0}(y_i|f_{k,i})}{P_{Y_t}({y_i})}}+I_0(k), k\in [m], t\in [n],
\end{align}
where $P_{Y_t}(1)\triangleq \frac{1}{D_0(r_t)}$.  Intuitively, the information value $I_t(k)$ captures the attacker's belief at time $t\in [n]$ about the possibility of user $u_k, k\in [m]$ being the victim, based on the received query responses, where a large positive $I_t(k)$ indicates a strong belief that the user is the victim, and a large negative $I_t(k)$ indicates a strong belief that the user is not the victim.
The identification function $Id_t$ first determines whether the maximum information value of all users exceeds $\log{\frac{1}{\epsilon}}$, where the parameter $\epsilon>0$ affects the resulting probability of error. If there exists a user whose information value exceeds  $\log{\frac{1}{\epsilon}}$, {then} that user is identified as the victim. Otherwise, the next query is made. So,
\begin{align}
    &x_t(\mathcal{G}_s,Y^{t-1})=r_{j_t}, t\in [n]\\
&    Id_t(\mathcal{G}_s,Y^t)=  
    \begin{cases}
    u_k\qquad& \text{ if } \exists ! k\in [m]: I_t(k)>log{\frac{1}{\epsilon}}\\
    e& \text{Otherwise}
    \end{cases}, t\in [n] \label{eq:id}
\end{align}

We call this attack the Augmented-ITS (A-ITS) since it uses both information thresholds and the group degree distribution. 

\begin{Theorem}
\label{th:2}
Consider the A\text{-}ITS described above with parameter $\epsilon>0$. Let $\overline{Q}_{A\text{-}ITS}$ be the resulting expected number of queries and $P_{e,A\text{-}ITS}$ the resulting probability of error. Then,
        \begin{align*}
        &\overline{Q}_{A\text{-}ITS}\leq
        \sum_{\theta\in \Theta}P_{\Theta}(\theta)\sum_{d\geq d_{\theta}^*}\frac{n}{\zeta(m,\alpha) d^\alpha}+ i_{\theta}^*
        ,\qquad P_{e,A\text{-}ITS}\leq \frac{\epsilon}{c'},
    \end{align*}
    where
    \begin{align}
       &d_{\theta}^*\triangleq\max_{d\in [m]}\bigg\{d:\psi \leq c' \sum_{\theta\in \Theta}P_{\Theta}(\theta) \sum_{d'\geq d-1}\frac{n}{\zeta(m,\alpha) d'^\alpha}I_{d',\theta}(Y;E_0)\bigg\}, \\
       &i^*\triangleq \min_{i\in [\frac{n}{\zeta(m,\alpha) (d^*_\theta-1)^{\alpha}}]}\bigg\{i: 
      \psi \leq c' \sum_{\theta\in \Theta}P_{\Theta}(\theta) \sum_{d'\geq d_{\theta}^*}\frac{n}{\zeta(m,\alpha) d'^\alpha}\times
      \\&\nonumber\qquad\qquad  \qquad \qquad  \qquad  
      I_{d,\theta}(Y:E_0)+iI_{d_{\theta}^*-1}(Y;E_0) \bigg\},
      \\&\psi\triangleq H(M)+\log{\frac{1}{\epsilon}+i_{\max}},
    \end{align}
    where $c'$ is from Proposition \ref{Prop:4}, 
    the variable $N_d, d\in [m]$ is the number of groups with degree equal to $d$ in the ground-truth graph, the mutual information $I_{d,\theta}(Y;E_0)$ is evaluated with respect to $P^{d,\theta}_{Y,E_0}=P^d_{E_0}P^{\theta}_{Y|E_0}$, the variable $E_0$ is Bernoulli with parameter $\frac{d}{m}$,  $P^{\theta}_{Y|E_0}$ is the query noise with parameter $\theta$ given in Definition \ref{def:QR},  $i_{max}\triangleq \max_{y,r,\theta\in \{0,1\}\times \Theta}\log{\frac{P^{\theta}_{Y|E_0}(y|r)}{P^{\theta}_{Y}(y)}}$,
and $P_{\Theta}(\theta)\triangleq \frac{|\{u_k| \theta(k)=\theta,\}|}{m}, \theta\in \Theta$.
\end{Theorem}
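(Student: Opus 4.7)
The plan is to decompose the argument into two essentially independent parts --- one for the probability of error and one for the expected number of queries --- both ultimately relying on the almost-memoryless property established in Proposition~\ref{Prop:4}. I would read $\exp(I_t(k) - I_0(k))$ as the likelihood ratio of the data under the hypothesis $M=k$ versus a reference measure in which $Y_s$ is drawn from its marginal $P_{Y_s}$, so that under the product-fingerprint reference the process is a nonnegative unit-mean martingale. Proposition~\ref{Prop:4} states that the true joint law of a fingerprint is sandwiched multiplicatively between $c'$ and $1$ times the corresponding product law, which implies $\mathbb{E}[\exp(I_t(k))] \leq P_M(k)/c'$ for any $k\neq M$ under the true measure. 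A maximal/optional-stopping Markov argument then yields $P(\sup_t I_t(k) > \log(1/\epsilon)) \leq \epsilon\, P_M(k)/c'$; summing over $k\neq M$ and using $\sum_k P_M(k) = 1$ gives $P_{e,A\text{-}ITS} \leq \epsilon/c'$, which is the first claim.

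For the expected number of queries, I would condition on $\theta(M)=\theta$ and analyze the victim's process $I_t(M)$. Each step $s$ corresponds to a query on the $s$-th largest group $r_{j_s}$; its increment $\log[P^{\theta}_{Y|E_0}(Y_s|F_{M,j_s})/P_{Y_s}(Y_s)]$ has conditional mean equal to the mutual information $I_{d,\theta}(Y;E_0)$ under $P^{d}_{E_0}P^{\theta}_{Y|E_0}$ (where $d$ is the degree of $r_{j_s}$) and is bounded by $i_{\max}$ in absolute value. Proposition~\ref{Prop:4} lets the increments be treated as almost independent (up to the constant $c'$), so that after concentration $I_t(M)$ tracks its expectation. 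By Theorem~\ref{th:1}, and using the concentration provided by Propositions~\ref{Prop:1} and~\ref{Prop:3}, the number of groups of degree $d$ concentrates at $n/(\zeta(m,\alpha)\, d^\alpha)$, so that querying them in decreasing order of $d$ produces expected cumulative information $\sum_{d\geq d^{*}_{\theta}}\frac{n}{\zeta(m,\alpha)\, d^\alpha}\, I_{d,\theta}(Y;E_0)$ once all groups of degree $\geq d^{*}_{\theta}$ are exhausted. Setting the stopping threshold to $\psi = H(M)+\log(1/\epsilon)+i_{\max}$ absorbs $\mathbb{E}[-I_0(M)] = H(M)$ from averaging the initial term, $\log(1/\epsilon)$ from the identification rule in~\eqref{eq:id}, and $i_{\max}$ from the overshoot on the step that crosses the threshold. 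Solving for the smallest number of queries whose accumulated expected information (shrunken by the safety factor $c'$ from Proposition~\ref{Prop:4}) exceeds $\psi$ gives precisely the definitions of $d_{\theta}^{*}$ and $i_{\theta}^{*}$, and averaging over $\theta$ yields the bound on $\overline{Q}_{A\text{-}ITS}$.

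The main obstacle will be making the joint concentration precise. Three sources of randomness intertwine: the random degree sequence of the ground-truth graph, the highly correlated fingerprint bits of a single user, and the sequentially adaptive responses $Y_s$. The degree randomness is absorbed by Propositions~\ref{Prop:1} and~\ref{Prop:3}; the fingerprint correlation is tamed by Proposition~\ref{Prop:4} at the cost of the $c'$ factor that appears in both bounds; and the adaptivity requires a martingale/optional-stopping treatment so that Markov's inequality can be applied at the random identification time $Q$ rather than at a deterministic horizon. The bounded-increment condition $|\Delta I_t(k)| \leq i_{\max}$ is what keeps the overshoot under control and is the reason $i_{\max}$ appears additively inside $\psi$. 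Once these ingredients are assembled, the rest of the argument is an essentially deterministic counting computation that distributes the information budget $\psi$ across degree bins weighted by the power-law frequency from Theorem~\ref{th:1}.
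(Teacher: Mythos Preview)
Your proposal is correct and follows essentially the same approach as the paper: a likelihood-ratio/change-of-measure argument combined with Proposition~\ref{Prop:4} for the error bound, and a Wald-type accumulation of per-query mutual information (with overshoot controlled by $i_{\max}$) for the query bound. One small point worth flagging: the sharper product sandwich in Proposition~\ref{Prop:4} (the version with the constant $c'$) requires the fingerprint to be sparse, i.e.\ $\sum_i \mathbbm{1}(s_i=1)=o(n)$; the paper invokes Proposition~\ref{Prop:5} to guarantee this holds for the victim's fingerprint with high probability, so you should cite it alongside Proposition~\ref{Prop:4} when you pass from the true joint law to the product law in the $\overline{Q}$ analysis.
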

\begin{proof}
Please refer to \ref{App:th:2}.
\end{proof}
The following is a direct consequence of Theorem \ref{th:2}, and the fact that $\int_d^{\infty}x^{-\alpha}dx\leq\sum_{i=d}^\infty\frac{1}{d^\alpha}\leq\int_{d-1}^{\infty}x^{-\alpha}dx$ for $\alpha>2$:
\begin{Corollary}
Consider the A\text{-}ITS with parameter $\epsilon>0$. Assume that $|\Theta|=1$, and $P_{Y|E_0}$ is a binary symmetric distribution  with crossover probability $n_q\leq \frac{1}{2}$.  Then,
    \begin{align}
        &\overline{Q}_{ITS}\leq
        \frac{cn}{(\alpha-1)\zeta(m,\alpha) {(d^*-2)}^{\alpha-1}},\\
        &P_{e,ITS}\leq \frac{\epsilon}{c'},
        \label{eq:th3:2}
    \end{align}
    where $c$ and $c'$  are from Propositions \ref{th:1} and \ref{Prop:4}, respectively, $d^*=\max_{d\in \{3,4,\cdots,m\}}\{d|\psi\leq \frac{c'cn}{(\alpha-1)\zeta(m,\alpha)d^{\alpha-1}}$ $\times(h_b(\frac{d}{m}\ast n_q)-h_b(n_q))\}$,
    $h_b(\cdot)$ is the binary entropy function, and  $a\ast b= a(b-1)+b(a-1), a,b\in [0,1]$. 
\end{Corollary}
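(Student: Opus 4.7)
The plan is to specialize Theorem \ref{th:2} to the case $|\Theta|=1$ with binary symmetric query noise and then replace the discrete tail sums by closed-form integrals using the given bound $\int_d^{\infty}x^{-\alpha}\,dx\leq\sum_{i\geq d}\frac{1}{i^\alpha}\leq\int_{d-1}^{\infty}x^{-\alpha}\,dx$ valid for $\alpha>2$.

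First, I would reduce Theorem \ref{th:2} to the single-noise setting: with $|\Theta|=1$, the sum $\sum_{\theta\in\Theta}P_\Theta(\theta)(\cdot)$ collapses to its single term and the parameters $d^*_\theta, i^*_\theta$ become independent of $\theta$. Next, I would evaluate the mutual-information term $I_d(Y;E_0)$ explicitly for a BSC with crossover $n_q$ acting on $E_0\sim\mathrm{Bern}(d/m)$. Writing $Y=E_0\oplus Z$ with $Z\sim\mathrm{Bern}(n_q)$ independent of $E_0$, the output is Bernoulli with parameter $\frac{d}{m}\ast n_q$, so $I_d(Y;E_0)=H(Y)-H(Y|E_0)=h_b(\frac{d}{m}\ast n_q)-h_b(n_q)$, exactly the expression appearing in the Corollary.

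The third step is the integral substitution. From $\int_d^{\infty}x^{-\alpha}\,dx=\frac{1}{(\alpha-1)d^{\alpha-1}}$, the upper-bound side of the stated inequality gives $\sum_{d'\geq d}\frac{1}{d'^\alpha}\leq\frac{1}{(\alpha-1)(d-1)^{\alpha-1}}$. Substituting this into the tail sum $\sum_{d\geq d^*}\frac{n}{\zeta(m,\alpha)d^\alpha}$ from Theorem \ref{th:2}, together with the leading constant $c$ from Theorem \ref{th:1} used on the degree-distribution prefactor, produces the claimed $\frac{cn}{(\alpha-1)\zeta(m,\alpha)(d^*-2)^{\alpha-1}}$ bound; the accumulated shift $-2$ arises because the defining inequality of $d^*_\theta$ in Theorem \ref{th:2} already involves a sum starting at $d-1$, and the integral substitution introduces a further $-1$ shift when rewriting this sum into the form appearing in the Corollary. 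The lower-bound side of the integral inequality is used symmetrically to translate the defining condition of $d^*_\theta$ into the simpler defining condition of $d^*$ stated in the Corollary. The probability-of-error estimate $P_{e,\mathrm{ITS}}\leq\epsilon/c'$ is inherited verbatim from Theorem \ref{th:2}.

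The main obstacle is the bookkeeping of inequality directions and off-by-one shifts so that the $d^*$ defined in the Corollary remains a valid threshold under the replacement of sums by integrals, together with verifying that the additive $i^*$ term from Theorem \ref{th:2} is dominated by the leading power-law tail. The latter uses $\psi=H(M)+\log\frac{1}{\epsilon}+i_{\max}=O(\log n)$, while $I_{d^*-1}(Y;E_0)$ is bounded below by a strictly positive constant whenever $n_q<\tfrac{1}{2}$ and $d^*/m$ is bounded away from $0$ and $1$; hence $i^*$ is of lower order than the polynomial tail $n/(d^*-2)^{\alpha-1}$ and can be absorbed into the constant $c$. The BSC mutual-information evaluation and the integral substitutions themselves are purely computational.
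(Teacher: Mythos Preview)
Your proposal is correct and follows exactly the paper's approach: the paper states only that the Corollary is ``a direct consequence of Theorem~\ref{th:2}, and the fact that $\int_d^{\infty}x^{-\alpha}\,dx\leq\sum_{i=d}^\infty i^{-\alpha}\leq\int_{d-1}^{\infty}x^{-\alpha}\,dx$ for $\alpha>2$,'' which is precisely the specialization-plus-integral-substitution argument you outline. You in fact supply more detail than the paper does, including the explicit BSC mutual-information computation and the handling of the $i^*$ term, both of which the paper leaves implicit.
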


\section{Simulation Results}
\begin{figure}[t]
 \centering \includegraphics[width=0.8\linewidth, draft=false]{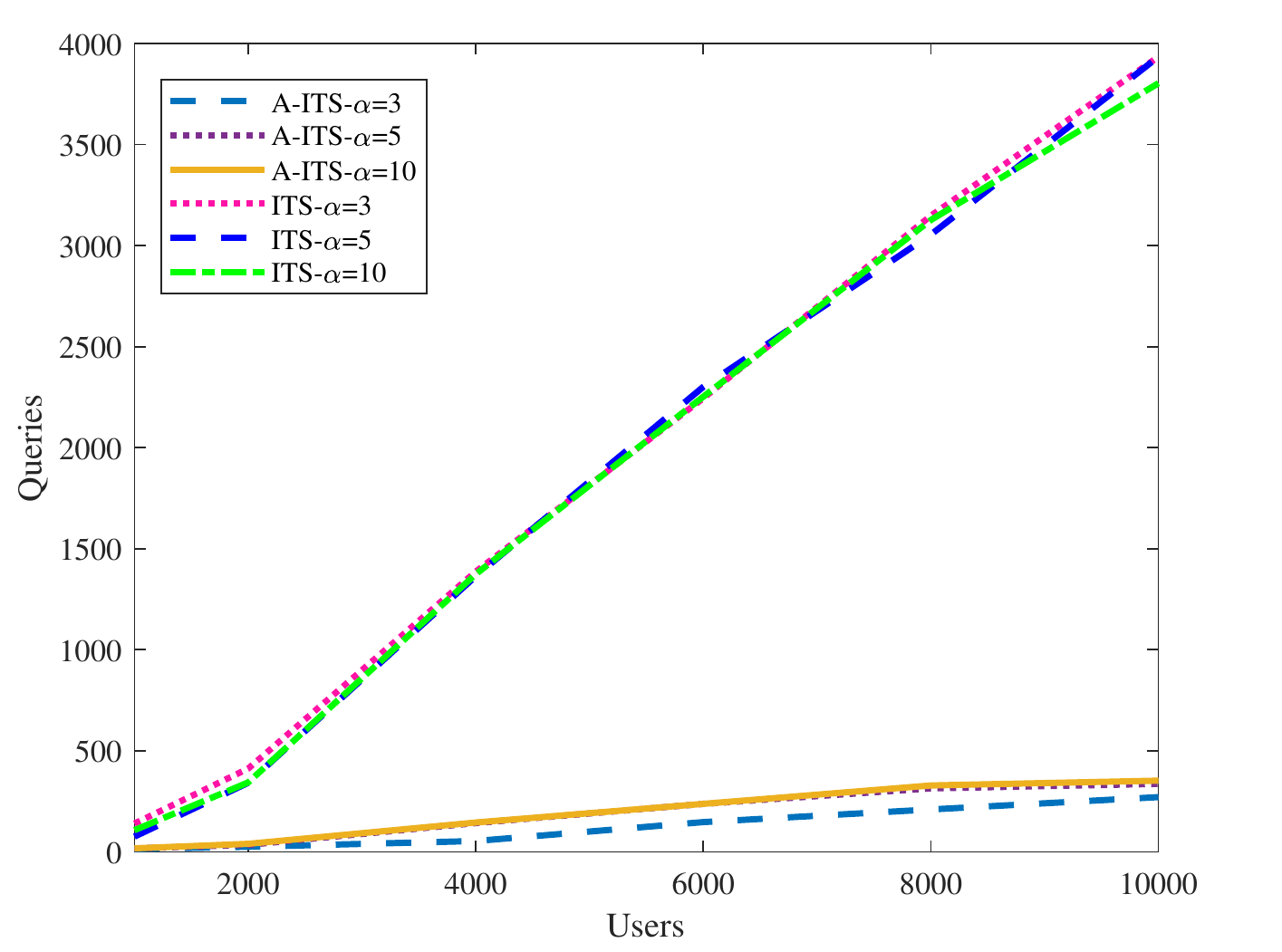}
 \caption{ Expected number of queries $\overline{Q}$ in A-ITS and ITS with error probability less than $0.05$.}
 \label{fig:sim:1}
\end{figure}
In this section, we provide a simulation of active fingerprinting attacks on synthesized bigraphs. 
In order to provide a baseline for comparison, we also investigate the performance of a natural extension of the ITS considered in \cite{shariatnasab2021fundamental}.
We generate the ground-truth with $\alpha\in\{3,5,10\}$. Furthermore, we consider $|\Theta|=1$ and a single $P_{Y|E_0}$ which is a binary symmetric distribution  with crossover probability $n_q=0.05$. We take the victim to be chosen equally likely among the users. We have simulated the attack with parameters $\mu=100$, $\epsilon=0.01$, and $\beta=0.1$ and $m= \{1000,2000,4000,6000,8000,10000\}$. For each set of parameters,  we have simulated the attack $500$ times, by generating the ground-truth five times and choosing a victim randomly and uniformly for each generation $100$ times. 
Figure \ref{fig:sim:1} shows the performance of A-ITS and ITS in terms of expected number of queries. We have chosen the parameter $\epsilon$ such that the empirical observed probability of error is close to $0.05$ for each set of simulation parameters. It can be observed  A-ITS  significantly outperforms the ITS. This suggests that the attacker can make significant improvements by leveraging its  knowledge of the group sizes in its  choice of queries. The expected number of queries is increasing in $\alpha$, and it grows linearly in the number of users $m$. The latter is in agreement with the observations made in \cite{shariatnasab2021fundamental} for the $\alpha\to \infty$ scenario.

\section{Conclusion}
The fundamental privacy limits under active fingerprinting attacks in power-law bipartite networks was considered.
The popularity-based model was investigated, and it was shown that using the appropriate choice of initial popularity values, its node degree distribution follows a power-law distribution with arbitrary parameter $\alpha>2$. An active fingerprinting deanonymization attack strategy called the augmented information threshold attack strategy (A-ITS) was proposed, and sufficient conditions for its success, based on network parameters, were derived. It was shown through simulations that the proposed attack significantly outperforms the state-of-the-art attack strategies.
\newpage

\begin{appendices}
\section{Proof of Theorem \ref{th:1}}
\label{Ap:th:1}
Fix $\epsilon>0$. Let $Y_j= \sum_{j'\neq j}\tau_{j'}(0)$ be the sum of all initial popularity values except the initial popularity value of the $j$th right-vertex. Note that $\mathbb{E}(Y_j)= \frac{m-1}{m}\mathbb{E}(Y)= \frac{(m-1)n\zeta(m,\alpha-1)}{m\zeta(m,\alpha)}$ where the last equality is due to Proposition \ref{Prop:1}.
We let $\mathcal{A}$ be the event that $|\Theta_j-\mathbb{E}(Y_j)|>\epsilon \mathbb{E}(Y_j)$, and write
\begin{align*}
    &P(D_j(\Delta)=k)= P(D_j(\Delta)=k,\mathcal{A}^c)+ P(D_j(\Delta)=k, \mathcal{A})
    \end{align*}
Note that from the arguments in the proof of Proposition \ref{Prop:1} we have $P(\mathcal{A})= o(n^{\beta})$ and from the theorem statement, we have $k=o(n^{\frac{-\beta}{\alpha}})$. Consequently, $P(\mathcal{A})=o(k^{-\alpha})$ as shown below:

\begin{align*}
P(\mathcal{A}) k^{\alpha}= \frac{P(\mathcal{A})}{n^{\beta}} k^{\alpha} n^{\beta}=
\frac{P(\mathcal{A})}{n^{\beta}} (k n^{\frac{\beta}{\alpha}})^{\alpha} \to 0 \text{ as } n\to \infty.
\end{align*}

So,
\begin{align*}
P(D_j(\Delta)=k)=    P(D_j(\Delta)=k,\mathcal{A}^c)+ o(k^{-\alpha}).
\end{align*}
Next, we investigate $P(D_j(\Delta)=k,\mathcal{A}^c)$. Note that $\Theta_j$ and $\tau_j(0)$ are independent variables. We have:
\begin{align*}
&
P(D_j(\Delta)=k,\mathcal{A}^c)= 
\sum_{\upsilon\in \mathcal{T}}\sum_{i=1}^m P(Y_j=\upsilon)P( \tau_j(0)=i)
    P(D_j(\Delta)=k|\tau_j(0)=i,Y=\upsilon+i)
    \\& \leq \max_{\upsilon\in \mathcal{T}}\sum_{i=1}^m P(\tau_j(0)=i)
  P(D_j(\Delta)=k|\tau_j(0)=i,Y=\upsilon+i),
  \end{align*}
  where $\mathcal{T}\triangleq[\lfloor\mathbb{E}(Y_j)(1-\epsilon)\rfloor,\lceil\mathbb{E}(Y_j)(1+\epsilon)\rceil]$, and in the last inequality we have used the fact that the maximum is greater that the average. Similarly, 
  \begin{align*}
& P(D_j(\Delta)=k,\mathcal{A}^c){\geq}  
P(\mathcal{A}^c) \sum_{i=1}^mP( \tau_j(0)=i)  \min_{\upsilon\in \mathcal{T}} P(D_j(\Delta)=k|\tau_j(0)=i,Y=\upsilon+i) 
\\&
 = (1-o(k^{-\alpha}))\min_{\upsilon\in \mathcal{T}}\sum_{i=1}^m P(\tau_j(0)=i)
 P(D_j(\Delta)=k|\tau_j(0)=i,Y=\upsilon+i),
  \end{align*}
  where we have used the fact that $P(\mathcal{A}^c) =  \sum_{\upsilon\in \mathcal{T}} P(Y_j=\upsilon) =1- o(k^{-\alpha})$. 
Furthermore, 
\begin{align*}
 &  P(D_j(\Delta)=k|\tau_j(0)=i,Y=\upsilon+i)
  =
  \sum_{b^{\Delta}\in \mathcal{B}_{\Delta,k}}\prod_{t=1}^{\Delta}  P(\tau_j(t)=x_t|\tau_j(t-1)=x_{t-1},Y=\upsilon+i),
\end{align*}
where $x_t\triangleq \sum_{\ell=1}^{t} b_\ell+\tau_j(0), t\in [\Delta]$, $x_0\triangleq \tau_j(0)$, and $\mathcal{B}_{\Delta,\ell}\triangleq \{b^{\Delta} \in \{0,1\}^{\Delta}| \sum_{\ell=1}^\Delta b_\ell=k\}$. Note that 
\begin{align*}
& P(\tau_j(t)=x_{t-1}+1|\tau_j(t-1)=x_{t-1},Y=\upsilon+i)=
 \frac{x_{t-1}}{\upsilon+i+ t-1}, 
 \\& P(\tau_j(t)=x_{t-1}|\tau_j(t-1)=x_{t-1},Y=\upsilon+i)=
 1- \frac{x_{t-1}}{\upsilon+ i+t-1}, 
\end{align*}
Next, we argue $\prod_{t=1}^{\Delta}  P(\tau_j(t)=x_t|\tau_j(t-1)=x_{t-1},Y=\upsilon+i)$ is equal for all  $b^{\Delta}\in \mathcal{B}_{\Delta,k}$. 
Note that any given pair  $b^{\Delta}, b^{'\Delta}\in \mathcal{B}_{\Delta,k}$ are permutations of each other since $\sum_{\ell=1}^{\Delta}b_\ell = \sum_{\ell=1}^{\Delta}b'_\ell$. Let $b'_i= b_{\sigma(i)}, i\in \Delta$, where $\sigma\in S_n$ and $S_n$ is the symmetric group of permutations over $[\Delta]$. Furthermore, define $ x'_t\triangleq \sum_{\ell=1}^{t} b'_\ell+\tau_j(0), t\in [\Delta]$.
As a first step, we only consider transpositions. To elaborate, we show that 
\begin{align}
 &  \prod_{t=1}^{\Delta}  P(\tau_j(t)=x_t|\tau_j(t-1)=x_{t-1},Y=\upsilon+i)=
\prod_{t=1}^{\Delta}  P(\tau_j(t)=x'_t|\tau_j(t-1)=x'_{t-1},Y=\upsilon+i),
\label{eq:Ap:Prop1:1}
\end{align}
where $\sigma'= (\kappa,\kappa+1)$ so that $\sigma'$ is the transposition which swaps $\kappa$ with $\kappa+1$ for a given $\kappa \in [\Delta-1]$. Note that if $b_\kappa=b_{\kappa+1}$, then the proof is trivial since $b^{\Delta}=b^{'\Delta}$. Assume that $b_\kappa=0, b_{\kappa+1}= 1$. 
Note that in this case
\begin{align*}
  & P(\tau_j(t)=x_t|\tau_j(t-1)=x_{t-1},Y=\upsilon+i)=
    P(\tau_j(t)=x'_t|\tau_j(t-1)=x'_{t-1},Y=\upsilon+i), t\neq \kappa,\kappa+1,
    \\
    &
 P(\tau_j(\kappa)=x_{\kappa}|\tau_j(\kappa-1)=x_{\kappa-1},Y=\upsilon+i)=1-\frac{x_{\kappa-1}}{\upsilon+i+\kappa-1},
     \\
    &
 P(\tau_j(\kappa+1)=x_{\kappa+1}|\tau_j(\kappa)=x_{\kappa},Y=\upsilon+i)=\frac{x_{\kappa-1}}{\upsilon+i+\kappa},
    \\
    &
 P(\tau_j(\kappa)\!=\!x'_{\kappa}|\tau_j(\kappa-1)=x'_{\kappa-1},Y=\upsilon\!+\!i)\!= \frac{x_{\kappa-1}}{\upsilon+i+\kappa-1},
     \\
    &
 P(\tau_j(\kappa+1)=x'_{\kappa+1}|\tau_j(\kappa)=x'_{\kappa},\Theta=\upsilon+i)=1-\frac{x_{\kappa-1}+1}{\theta+i+\kappa},
\end{align*}
which proves Equation \eqref{eq:Ap:Prop1:1}. The proof for the case where $b_{\kappa}=1, b_{\kappa+1}=0$ follows similarly. Next, we extend the argument to arbitrary $\sigma\in S_n$. It is well-known that a decomposition $\sigma= \sigma_{1}\circ \sigma_{2}\circ \cdots \circ \sigma_{r}, r\in \mathbb{N}$ always exists, where  $\sigma_{j}=(\ell_{j}, \ell_{j}+1), \ell_{j}\in [\Delta-1]$ is the transposition which swaps $\ell_{j}$ and $\ell_{j}+1$
(e.g. \cite{isaacs2009algebra}). 
The proof of Equation \eqref{eq:Ap:Prop1:1} for general $\sigma\in S_n$ follows by  iterative application of the above arguments for each transposition.

Consequently, 
\begin{align*}
    &P(D_j(\Delta)=k,\mathcal{A}^c)\geq   (1-o(k^{-\alpha}))\min_{\upsilon\in \mathcal{T}} \sum_{i=1}^m {\Delta \choose i}\frac{1}{\zeta(m,\alpha)i^\alpha}
 \prod_{t=1}^{\Delta}  P(\tau_j(t)=\overline{x}_t|\tau_j(t-1)=\overline{x}_{t-1},Y=\upsilon+i),
\end{align*}
where $\overline{x}_t=\tau_j(0)=i, t\in [\Delta-k]$ and $\overline{x}_t=\Delta+\tau_j(0)-t+i, t\in [\Delta-k+1,\Delta]$. So, 
\begin{align*}
    &P(D_j(\Delta)=k,\mathcal{A}^c)\geq
    (1-o(k^{-\alpha}))\min_{\upsilon\in \mathcal{T}}
    \sum_{i=1}^m {\Delta \choose k}\frac{1}{\zeta(m,\alpha)i^\alpha}
        \end{align*}
\begin{align*}
\\&\times \prod_{t=1}^{\Delta-k}  (1-\frac{i}{\upsilon+i+t-1})\times   \prod_{t=\Delta-k+1}^{\Delta}  \frac{i+t-\Delta+k-1}{\upsilon+i+t-1}
\\&= (1-o(k^{-\alpha}))
  \min_{\upsilon\in \mathcal{T}}
    \sum_{i=1}^m\frac{1}{\zeta(m,\alpha)i^\alpha}
\\&\times\frac{\Delta!}{k!(\Delta-k)!} \frac{(\upsilon+i-1)!}{(\upsilon+i+\Delta-1)!}
\frac{(\upsilon+\Delta-k-1)!}{(\upsilon-1)!}
\frac{(i+k-1)!}{(i-1)!}
\\& 
=(1-o(k^{-\alpha}))\min_{\upsilon\in \mathcal{T}}
    \sum_{i=1}^m \frac{1}{\zeta(m,\alpha)i^\alpha}
    \times \frac{{i+k-1 \choose i-1} { \upsilon+\Delta-k-1\choose \Delta-k}}
    {{\upsilon+i+\Delta-1 \choose \Delta}
  }
  \\& = 
  (1-o(k^{-\alpha}))\min_{\upsilon\in \mathcal{T}}
    \sum_{i=1}^m \frac{1}{\zeta(m,\alpha)i^\alpha}\times \frac{i}{i+k}
    \times \frac{{i+k \choose i} { \upsilon+\Delta-k-1\choose \upsilon-1}}
    {{\upsilon+\Delta+i-1 \choose \upsilon+i-1}
  }
      \end{align*}
Let $\psi\triangleq \frac{\upsilon^*}{\Delta}$, where $\upsilon^*$ achieves the minimum above, and define 
\[g(i)\triangleq \frac{{i+k \choose i}}{{(1+\psi)\Delta +i-1 \choose \psi \Delta+i-1}},\qquad  \psi k\leq i \leq \psi k +\sqrt{k}.\]
We first show that $g(i)$ is monotonically decreasing:
\begin{align*}
    \frac{g(i+1)}{g(i)}&= \frac{(i+k+1)(\psi \Delta+i)}{(i+1)((1+\psi)\Delta+i)}
     = \frac{k(\psi \Delta +i)+(i+1)\psi \Delta+i^2+i}{(i+1)\Delta+(i+1)\psi \Delta+i^2+i}
\end{align*}
It suffices to show that $
 k(\psi \Delta+i) \leq (i+1)\Delta$,
 which holds if and only if $\frac{i}{\Delta} \leq \frac{i-\psi k+1}{k}$ for  $n$ large enough. It can be verified that the latter holds since $k= o(n^{\frac{-\beta}{\alpha}})$ and $\Delta= \mu n$ and by noting that $\frac{\beta}{\alpha}\leq \frac{1}{3}$. Next, we show that $ \frac{g(k +\sqrt{k})}{g(k)}$ is bounded as $k\to \infty$. To see this, note that:
\begin{align}
& \nonumber  \frac{g(k +\sqrt{k})}{g(k)}=\prod_{j'=1}^{\sqrt{k}}\left(\frac{(1+\psi)k+j}{\psi k+j}
  \right)\left(\frac{\psi \Delta+\psi k+j-1}{(1+\psi)\Delta+\psi k+j-1}\right)
 \\&=\nonumber
 \prod_{j'=1}^{\sqrt{k}}\left(
 \frac{1+\psi}{\psi}- \frac{j}{\psi(\psi k+j)}
  \right)  \left(\frac{\psi}{1+\psi}+ \frac{\psi k+j-1}{(1+\psi)((1+\psi)\Delta+\psi k+j-1)}\right)
\\&=\nonumber
 \prod_{j'=1}^{\sqrt{k}}\left(
 1- \frac{j}{(1+\psi)(\psi k+j)}
  \right) \left(1+ \frac{\psi k+j-1}{\psi((1+\psi)\Delta+\psi k+j-1)}\right),
\end{align}
where the second term is $1+O(\frac{k}{\Delta})= 1+o(n^{-1-\frac{\beta}{\alpha}})$, so:
\begin{align*}
  &  \prod_{j'=1}^{\sqrt{k}} \left(1+ \frac{\psi k+j-1}{\psi((1+\psi)\Delta+\psi k+j-1)}\right)= 1+o(\sqrt{k}n^{-1-\frac{\beta}{\alpha}}) =1+o( n^{-1-\frac{3\beta}{2\alpha}})
\end{align*}
Also,
\begin{align*}
1 &    \geq \prod_{j'=1}^{\sqrt{k}}\left(
 1- \frac{j}{(1+\psi)(\psi k+j)}
  \right)
  \geq 
  \left(
 1- \frac{\sqrt{k}}{(1+\psi)(\psi k)}
  \right)^{\sqrt{k}}\geq e^{-\frac{1}{(1+\psi)\psi}},
\end{align*}
where in the last equality we have used the well-known result that $(1+\frac{a}{n})^n\to e^{a}, a>0$ as $n\to \infty$. Consequently, we have: 
\begin{align*}
& \nonumber  \frac{g(k +\sqrt{k})}{g(k)}=
c(1+o( n^{-1-\frac{3\beta}{2\alpha}}),
\end{align*}
for some constant $c\in [e^{-\frac{1}{(1+\psi)\psi}},1]$. Note that $\frac{\beta}{\alpha}\leq \frac{1}{3}$. So, $\frac{g(k +\sqrt{k})}{g(k)}$ is bounded as $n\to \infty$.
Hence, 
      \begin{align*}          &P(D_j(\Delta)=k,\mathcal{A}^c) \geq 
  (1-o(k^{-\alpha}))
    \sum_{i=\psi k}^{\psi k +\sqrt{k}} \frac{1}{\zeta(m,\alpha)i^\alpha}\times \frac{i}{i+k}
    \times \frac{{i+k \choose i} { (1+\psi)\Delta-k-1\choose \psi \Delta-1}}
    {{(1+\psi)\Delta+i-1 \choose \psi \Delta+i-1}
  }
 \\&\geq 
 c'\sqrt{k}(1-o(k^{-\alpha}))
    \frac{1}{\zeta(m,\alpha)(\psi k)^\alpha}\times 
    \frac{{(1+\psi)k \choose \psi k } { (1+\psi)\Delta-k-1\choose \psi \Delta-1}}
    {{(1+\psi)\Delta+\psi k-1 \choose \psi \Delta+\psi k-1}
  },
      \end{align*}
      where we have defined $c'\triangleq c \frac{\psi}{\psi+1}$. Next, we use the fact that ${n\choose k}= c_k \sqrt{\frac{n}{2\pi k (n-k)}}e^{nh_b(\frac{k}{n})}, n\in \mathbb{N}, k\leq n$, where $c_k\in [e^{\frac{-1}{6k}},1]$, and $h_b(p)=-p\ln(p)-(1-p)\ln(1-p)$ is the binary entropy function measured in nats (e.g. \cite{stanica2001good}) as follows:
      \begin{align*}
 &P(D_j(\Delta)=k,\mathcal{A}^c)
\geq 
 c'\sqrt{k}(1-o(k^{-\alpha}))
    \frac{1}{\zeta(m,\alpha)(\psi k)^\alpha}
    \frac{{(1+\psi)k \choose \psi k } { (1+\psi)\Delta-k-1\choose \psi \Delta-1}}
    {{(1+\psi)\Delta+\psi k-1 \choose \psi \Delta+\psi k-1}}
\\& = c'' (1-o(k^{-\alpha}))
    \frac{1}{\zeta(m,\alpha)(\psi k)^\alpha}\sqrt{\frac{((1+\psi)\Delta-k-1)(\psi \Delta+\psi k -1)\Delta}{ (\psi \Delta-1)(\Delta-k-1)((1+\psi)\Delta+\psi k -1) }}\times
\\& exp\Big((1+\psi) kh_b(\frac{1}{\psi+1})+((1+\psi)\Delta-k-1)h_b(\frac{\psi \Delta-1}{(1+\psi)\Delta-k-1})-
\\&((1+\psi)\Delta+\psi k -1) h_b(\frac{\psi\Delta+\psi k-1}{(1+\psi)\Delta+\psi k -1})\Big),
      \end{align*}
where $exp(x)=e^x, x\in \mathbb{R}$, and $c''\in \mathbb{R}$ is a constant number. We note that
\begin{align*}
    \sqrt{\frac{((1+\psi)\Delta-k-1)(\Delta+\psi k -1)\Delta}{ (\psi\Delta-1)(\Delta-k)((1+\psi)\Delta+\psi k -1) }}\to 1 \text{ as } n\to \infty,
\end{align*}
since $\Delta=\mu n$ grows linearly in $n$ by the sparsity condition in Section \ref{sec:gen}. Also, 
\begin{align*}
  & exp\Bigg(\left(1+\psi\right) kh_b\left(\frac{\psi}{\psi+1}\right)+\left((1+\psi)\Delta-k-1\right)h_b\left(\frac{\psi\Delta-1}{(1+\psi)\Delta-k-1}\right)-
\\&\left((1+\psi)\Delta+\psi k -1\right) h_b\left(\frac{\psi\Delta+\psi k-1}{(1+\psi)\Delta+\psi k -1}\right)\Bigg)
\\& = 
exp\Bigg(\left((1+\psi)\Delta-k-1\right)\left(h_b\left(\frac{\psi\Delta-1}{(1+\psi)\Delta-k-1}\right)-h_b\left(\frac{\psi}{\psi+1}\right)\right)+
\\&\left((1+\psi)\Delta+\psi k -1\right) \left(h_b\left(\frac{\psi}{\psi+1}\right)- h_b\left(\frac{\psi\Delta+\psi k-1}{(1+\psi)\Delta+\psi k -1}\right)\right)\Bigg).
\end{align*}
Using the second order Taylor's approximation, we have:
\begin{align*}
 & h_b\left(\frac{\psi\Delta-1}{(1+\psi)\Delta-k-1}\right)-h_b\left(\frac{\psi}{\psi+1}\right)= 
 \frac{-1+k\psi}{((1+\psi)\Delta-k-1)(\psi+1)}\log{\psi} -
  \\&\qquad \qquad 
 \frac{1}{2} \left(\frac{-1+k\psi}{((1+\psi)\Delta-k-1)(\psi+1)} \right)^2 \frac{1}{p_1(1-p_1)},
\end{align*}
where $p_1$ lies between $\frac{\psi\Delta-1}{(1+\psi)\Delta-k-1}$ and $\frac{\psi}{\psi+1}$ and is hence bounded away from 0 and 1 as $n\to \infty$. As a result $\frac{1}{p_1(1-p_1)}$ is bounded as $n\to \infty$.
Similarly,
\begin{align*}
  & \left(h_b\left(\frac{\psi}{\psi+1}\right)- h_b\left(\frac{\psi\Delta+\psi k-1}{(1+\psi)\Delta+\psi k -1}\right)
  \right)=
  \\&\frac{-\psi k +1}{(\psi+1)((1+\psi)\Delta+\psi k -1)}\log{\psi}-
  \\&\frac{1}{2}\left(\frac{-\psi k +1}{(\psi+1)((1+\psi)\Delta+\psi k -1)}\right)^2 \frac{1}{p_2(1-p_2)},
\end{align*}
where $p_2$ lies between $\frac{\psi\Delta+\psi k-1}{(1+\psi)\Delta+\psi k -1}$ and $\frac{1}{\psi+1}$ and is hence bounded away from zero as $n\to \infty$.
As a result,
\begin{align*}
&    exp\Bigg(\left(1+\psi\right) kh_b\left(\frac{1}{\psi+1}\right)+
\\& 
\left((1+\psi)\Delta-k-1\right)h_b\left(\frac{\psi\Delta-1}{(1+\psi)\Delta-k-1}\right)-
\\&\left((1+\psi)\Delta+\psi k -1\right) h_b\left(\frac{\psi\Delta+\psi k-1}{(1+\psi)\Delta+\psi k -1}\right)\Bigg)
\\&
= exp\Bigg(
 -\frac{1}{2} \left(\frac{(-1+k\psi)^2}{(\psi+1)^2((1+\psi)\Delta-k-1)} \right) \frac{1}{p_1(1-p_1)}
 \\& +
  \frac{1}{2}\left(\frac{(-\psi k +1)^2}{(\psi+1)^2((1+\psi)^2\Delta+\psi k -1)}\right) \frac{1}{p_2(1-p_2)}\Bigg)
\end{align*}
which is bounded as $n \to \infty$ since $\frac{k^2}{\Delta}= O(n^{1-\frac{2\beta}{\alpha}})$ as explained in the prequel. We have:
  \begin{align*}
 &P(D_j(\Delta)=k,\mathcal{A}^c)
 \geq \overline{c}(1-o(k^{-\alpha}))
    \frac{1}{\zeta(m,\alpha)(\psi k)^\alpha},
      \end{align*}
      for a constant $\overline{c}\in \mathbb{R}$ and $n$ large enough. A similar argument can be provided to show that
          \begin{align*}
 &P(D_j(\Delta)=k,\mathcal{A}^c)
 \leq \overline{c}'
    \frac{1}{\zeta(m,\alpha)(\psi k)^\alpha},
      \end{align*}
  For a constant $\overline{c}'$.    This completes the proof. \qed

\section{Proof of Proposition \ref{Prop:1}}
\label{Ap:Prop:1}
The following proves Equation \eqref{eq:P2:1}:
\begin{align*}
    \mathbb{E}(Y)&\stackrel{(a)}{=} \sum_{j=1}^n \mathbb{E}(\tau_j(0))
    \\& = n\mathbb{E}(\tau_1(0))= n\sum_{i=1}^m i\cdot \frac{1}{\zeta(m,\alpha)i^{\alpha}}\stackrel{(b)}{=} \frac{n\zeta(m,\alpha-1)}{\zeta(m,\alpha)},
\end{align*}
where in (a) we have used linearity of expectation, and in (b) we have used the definition of the Riemann Zeta function. Next, we prove Equation \eqref{eq:P2:2}:
\begin{align*}
    Var(Y)&\stackrel{(a)}{=} n Var(\tau_1(0))
    \\&\leq n\mathbb{E}(\tau^2_1(0))= n\sum_{i=1}^m \frac{1}{\zeta(m,\alpha)i^{\alpha}}i^2= \frac{n \zeta(m,\alpha-2)}{\zeta(m,\alpha)}. 
\end{align*}
where in (a) we have used independence of initial popularity values. 
To prove Equation \eqref{eq:P2:3}, note that by Chebychev's inequality, we have:
\begin{align*}
  & P(|Y-\mathbb{E}(Y)|>\epsilon \mathbb{E}(Y))\leq \frac{Var(Y)}{\epsilon^2\mathbb{E}^2(Y)}\leq  \frac{ \zeta(m,\alpha-2)\zeta(m,\alpha)}{n\epsilon^2 \zeta^2(m,\alpha-1)}.
\end{align*}
Furthermore, note that $f(x)=x^{-s}, s>0$ is a convex function. So, we have $\int_{x=1}^m f(x)dx \geq \sum_{i=1}^m f(i)$. Consequently,
\begin{align*}
    &\zeta(m,\alpha-t)\leq \int_{x=1}^m x^{t-\alpha}dx=\frac{m^{t+1-\alpha}-1}{t+1-\alpha}, t=0,1,2.
\end{align*}
So, 
\begin{align*}
 & P(|Y-\mathbb{E}(Y)|>\epsilon \mathbb{E}(Y))\leq 
 \frac{\frac{m^{1-\alpha}-1}{1-\alpha}\cdot\frac{m^{3-\alpha}-1}{3-\alpha}}{n\epsilon^2 \cdot\frac{m^{2-\alpha}-1}{2-\alpha}} = \frac{2-\alpha}{\epsilon^2(1-\alpha)(3-\alpha)}
 \frac{(m^{1-\alpha}-1)(m^{3-\alpha}-1)}{n(m^{2-\alpha}-1)}
 \\&\stackrel{(a)}{=}O(  \frac{max(1,m^{3-\alpha})}{n}) \stackrel{(b)}=O(n^{\beta}),
\end{align*}
where in (a) we have used the assumption that $\alpha>2$, and in (b) we have used the fact that $m=\eta n$ for a constant $\eta>0$.
\qed 

\section{Proof of proposition \ref{Prop:2}}
\label{Ap:Prop:2}
Note that $P(\tau_j(0)=1)\to 1$ as $\alpha\to \infty$; therefore, the summation of initial popularities is equal to the total number of groups, i.e. $Y=n$.  So,
\begin{align*}
 &  \lim_{\alpha\to \infty}P(D_j(\Delta)=k)=  \lim_{\alpha\to \infty}
  \sum_{b^{\Delta}\in \mathcal{B}_{\Delta,k}}\prod_{t=1}^{\Delta}  P(\tau_j(t)=x_t),
\end{align*}

Similar to the steps in Appendix \ref{Ap:th:1}, we have:

\begin{align*}
    &\lim_{\alpha \to \infty} P(D_j(\Delta)=k)=
    {\Delta \choose k}\prod_{t=1}^{\Delta-k}  (1-\frac{1}{n+t-1})\times   \prod_{t=\Delta-k+1}^{\Delta}  \frac{t-\Delta+k}{n+t-1}
\\&=
    \frac{\Delta!}{k!(\Delta-k)!} \frac{(n-1)!}{(n+\Delta-1)!}
\frac{(n+\Delta-k-2)!}{(n-2)!}\frac{k!}{0!}
\\& =
     \frac{{n+\Delta-k-2 \choose n-2}}{{\Delta+n-1 \choose \Delta}}
     \end{align*}
     Recall that $\Delta=(1+\mu)n$. As a result,
     \begin{align*}
     &\lim_{\alpha \to \infty} P(D_j(\Delta)=k)=   \frac{{(1+\mu)n-k-1 \choose n-2}}{{(1+\mu)n \choose \mu n}}
  \\&\stackrel{(a)}{=}
  c\sqrt{\frac{((1+\mu)n-k-1)\mu n}{(n-2)(\mu n-k+1)(1+\mu)}}\times
  \\&
  exp((1+\mu)n-k-1)h_b(\frac{n-2}{(1+\mu)n-k-1})-
  \\&
  (1+\mu)nh_b(\frac{\mu }{1+\mu}))
\end{align*}
where in (a) we use the fact that ${n\choose k}= c_k \sqrt{\frac{n}{2\pi k (n-k)}}e^{nh_b(\frac{k}{n})}, n\in \mathbb{N}, k\leq n$, where $c_k\in [e^{\frac{-1}{6k}},1]$.
Note that
\begin{align*}
    & \sqrt{\frac{((1+\mu)n-k-1)\mu n}{(n-2)(\mu n-k+1)(1+\mu)}} \to 1 \text{ as } n \to \infty.
\end{align*}
So, 
\begin{align*}
     &\lim_{n\to \infty}\lim_{\alpha \to \infty}P(D_j(\Delta)=k)=\lim_{n\to \infty}  c \times 
  exp((1+\mu)n-k-1)
  h_b(\frac{n-2}{(1+\mu)n-k-1})-
  (1+\mu)nh_b(\frac{1}{1+\mu}))
\end{align*}
Using the second order Taylor's approximation, we have:
\begin{align*}
  & (1+\mu) n(h_b(\frac{n-2}{(1+\mu)n-k-1})-
  h_b(\frac{1}{1+\mu}))=
  \\& (1+\mu)n\Big(\frac{k-1-2\mu }{(1+\mu)(1+\mu n-k-1)} \log(\mu)-
  \\&
  \frac{1}{2}\frac{(k-1-2\mu)^2}{(1+\mu)^2((1+\mu) n-k-1)^2}\frac{1}{p_1(1-p_1)}\Big)
  \\&\stackrel{(a)}{=}
  \frac{k}{1+\mu}\log(\mu)+O(1)
\end{align*}
where in (a) $p_1$ lies between $\frac{n-2}{(1+\mu)n-k-1}$ and $\frac{1}{1+\mu}$ and is hence bounded away from 0 and 1 as $n\to \infty$. As a result $\frac{1}{p_1(1-p_1)}$ is bounded as $n\to \infty$. Furthermore,
\begin{align*}
    & exp\Big((1+\mu)n-k-1)h_b(\frac{n-2}{(1+\mu)n-k-1})-
    \\&
  (1+\mu)nh_b(\frac{1}{1+\mu})\Big)=
  \\&
  exp\Big(\frac{k}{1+\mu}\log(\mu)-(k+1)h_b(\frac{1}{1+\mu})+O(1)\Big)=
  \\&
  exp\Big(\frac{k}{1+\mu}\log(\mu)-\frac{k}{1+\mu}\log(1+\mu)-\frac{k\mu}{1+\mu}\log(\frac{1+\mu}{\mu})-
  \\&
  h_b(\frac{1}{1+\mu})+O(1)\Big)=
  \\&
  exp(k\log(\mu)-k\log(1+\mu)-h_b(\frac{1}{1+\mu})+O(1))
  \\&
  exp(\log(\frac{\mu}{1+\mu})^k+O(1))=
  c(\frac{\mu}{1+\mu})^k
\end{align*}
So,
\begin{align*}
     &\lim_{n\to \infty}\lim_{\alpha \to \infty}P(D_j(\Delta)=k)=c(\frac{\mu}{1+\mu})^k
\end{align*}
Note that we must have $c=\frac{1}{1+\mu}$. This completes the proof. \qed
\section{Proof of Proposition \ref{Prop:3}}
\label{App:Prop:3}
We provide an outline of the proof. Equation \eqref{eq:prop1:1} follows by linearity of expectation and the fact that $\sum_{j\in [n]}\mathbb{E}(D_{\Delta,j})=\Delta$. To show Equation \eqref{eq:prop1:2}, we note that $\mathbb{E}(D^2_j(\Delta))= \sum_{i=1}^m P(\tau_j(0)=i)\mathbb{E}(D^2_j(\Delta)|\tau_j(0)=i)$. Next, for a given $i\in [m]$ we construct a new  bipartite graph by replacing the right-vertex $v_{2,j}$ by $i$  right-vertices $v_{2,j,k}, k\in [i]$ each with initial popularity values $\tau_{j,k}(0)=1, k\in [i]$. It is straightforward to verify that the degree distribution of the right-vertices in the original graph, other than $v_{2,j}$, is the same as the new graph, and the degree distribution of $v_{2,j}$ in the original graph is the same as the sum of the degrees of the new vertices $v_{2,j,k}, k\in [i]$. So, $\mathbb{E}(D^2_j(\Delta)|\tau_j(0)=i)=\mathbb{E}((D_{j,1}(\Delta)+D_{j,2}(\Delta)+\cdots+ D_{j,i}(\Delta))^2|\tau_{j,k}(0)=1, k\in [i])$. Consequently, $\mathbb{E}(D^2_j(\Delta)|\tau_j(0)=i)\leq i \mathbb{E}(D_{j,1}^2(\Delta)|\tau_{j,1}(0)=1)+i(i-1)\mathbb{E}(D_{j,1}(\Delta)D_{j,2}(\Delta)|\tau_{j,1}=\tau_{j,2}=1)$. So, $\mathbb{E}(D_j^2(\Delta))\leq \mathbb{E}(\tau_j(0))\mathbb{E}(D^2_{j,1}(\Delta)|\tau_{j,1}(0)=1)+\mathbb{E}(\tau^2_j(0))\mathbb{E}(D_{j,1}(\Delta)D_{j,2}(\Delta)|\tau_{j,1}(0)=\tau_{j,2}=1)$. The two terms on the right hand side of the last equation are finite as $m\to \infty$ since $\mathbb{E}(\tau^2_j(0))=\zeta(m,\alpha-2)$, $\mathbb{E}(\tau_j(0))=\zeta(m,\alpha-1)$, and due to Proposition 1 in \cite{shariatnasab2021fundamental} which shows the result conditioned on $\tau_j(0)=1$. To prove Equation \eqref{eq:prop1:3}, we have:
 \begin{align*}
    \mathbb{E}(D_{\Delta,i}D_{\Delta,j})
    &=   \mathbb{E}(D_{\Delta,1}(\frac{1}{n-1}\sum_{j'=2}^{n}D_{\Delta,j'}))
    \\&=\frac{\Delta}{n-1}\mathbb{E}(D_{\Delta,1})-\frac{1}{n-1}\mathbb{E}(D^2_{\Delta,1})
= \mu^2+O(\frac{1}{n}).
\end{align*}
Equations  \eqref{eq:prop1:2.5}, \eqref{eq:prop1:4}, and  \eqref{eq:prop1:5} can be shown similarly. The proof is omitted for brevity.   
\section{Proof of Proposition \ref{Prop:5}}
\label{App:Prop:5}
We provide an outline of the proof.  Note that $\frac{C_i}{n}= \frac{1}{n}\sum_{j=1}^n \mathbbm{1}(R(i,j))$, where $R_{i,j}=\mathbbm{1}(v_{1,i}\in \mathcal{V}_{2,j}), i\in [m], j\in [n]$. Also,  for any $\mathcal{A}\subset [n]$ we have:
\begin{align*}
    &\mathbb{E}(\prod_{j\in \mathcal{A}}(R_{i,j}))= P(R_{i,j}=1, j\in \mathcal{A})
\\&= \sum_{\substack{d^{\mathcal{A}},\\i_{j}\in [m],j \in \mathcal{A}, \theta \in [m]}} P_{\Theta, (\tau_j(0), D_j)_{j\in \mathcal{A}}}(\theta,d^{\mathcal{A}}, i^{\mathcal{A}})\prod_{j\in \mathcal{A}} P(R_{i,j}=1|\theta,d_j,i_j)
\\&=  \sum_{\substack{d^{\mathcal{A}},\\i_{j}\in [m],j \in \mathcal{A}, \theta \in [m]}} P_{\Theta, (\tau_j(0), D_j)_{j\in \mathcal{A}}}(\theta,d^{\mathcal{A}},i^{\mathcal{A}})\\&\qquad \times \prod_{j\in \mathcal{A}} \frac{d_j+i_j}{m+\theta}
{\leq } \frac{1}{m^{|\mathcal{A}|}}\mathbb{E}(\prod_{j\in \mathcal{A}}(D_j+\tau_j(0))
  {\leq} \frac{\lambda(m,\alpha)}{m}^{|\mathcal{A}|}.
\end{align*}
So, using an extension of Hoeffding's inequality to weakly correlated variables given in \cite{impagliazzo2010constructive}, we have:
\begin{align*}
   & P(C_i\geq \ell)\leq c2^{-nD_b(\frac{\lambda(m,\alpha)}{m}(1+\psi)||\frac{\lambda(m,\alpha)}{m})},
\end{align*}
where $\ell= \frac{n}{m}\lambda(m,\alpha)(1+\psi)=\frac{1}{\beta}\lambda(m,\alpha)(1+\psi)$ and $\psi\in (0, \frac{\lambda(m,\alpha)}{\mu}-1)$. To derive \eqref{eq:prop2}, we note that: 
\begin{align*}
   & P(C_i\geq \ell)\leq c2^{-n(\frac{\lambda(m,\alpha)}{m}(1+\psi)\log{(1+\psi)}+O(\frac{1}{n}))}\to 0,\text{ as } n\to \infty.
\end{align*}
\qed

\section{Proof of Theorem \ref{th:2}}
\label{App:th:2}
We provide an outline of the proof. Let $I_{k}(\mathcal{G}_s,Y^\kappa)$ be the information value of user $k\in [m]$ given scannd graph $\mathcal{G}_s$ and query responses $Y^{\kappa}, \kappa\in [n]$.
 Define the following stopping times
\begin{align*}
    &t_k\triangleq\min_{t}\bigg\{t\big|  I_{k}(\mathcal{G}_s,Y^t)>
    \log{\frac{1}{\epsilon}}\bigg\}, k\in [m],
   \qquad \qquad t^*\triangleq\min_{k\in [m]}t_k
\end{align*}
Note that $\overline{Q}_{A\text{-}ITS}=\mathbb{E}(t^*)$.
Fix $n'\in \mathbb{N}$. Let $T_{n'}= \min\{t_M,n'\}$. Note that: 
\begin{align}
    \mathbb{E}(I_{T_{n'}(M)})
    \geq c'(\sum_{d'\geq d}\mathbb{E}(N_{d})I_{d,\theta}(Y:E_0)+
      i'_{\theta}I_{d-1}(Y;E_0))-H(M),
\label{eq:App:F:1}\end{align}
where $\mathbb{E}(T_{n'})=\sum_{d'\geq d} \mathbb{E}(N_d')+i', i_{\theta}'\leq  \mathbb{E}(N_{d'-1}), \mathbb{E}(N_d')=\frac{n}{\zeta(m,\alpha) d'^\alpha}$, and we have used Wald's identity \cite{wald1944cumulative} and   Proposition \ref{Prop:5} to upper bound the expectation over the fingerprint distribution with that over a product distribution.
Note that 
\begin{align}
 \mathbb{E}\left(I_{T_{n'}}\left(M\right)\right)\leq \mathbb{E}\left(I_{T_{n'}-1}\left(M\right)\right)+i_{max}\leq \log\frac{1}{\epsilon}+i_{max}.
 \label{eq:App:F:2}
\end{align}
Equations \eqref{eq:App:F:1} and \eqref{eq:App:F:2} yield the desired bound on $\overline{Q}_{A\text{-}ITS}$. The proof  for the probability of error follows similar steps as that of Theorem 1 in \cite{shariatnasab2021fundamental} and is provided for completeness as follows:
\begin{align*}
  &  P_e= P(\exists j\neq M: t_j\leq t_M)\leq \sum_{j\neq M} P(t_j\leq \infty)
  = \sum_{j\neq M} \lim_{\eta\to \infty} P(\kappa_j\leq \eta)
  \\&\stackrel{(a)}{=}\sum_{j\neq M} \lim_{\eta\to \infty} \mathbb{E}_{P_{Y^n,(R_{M,i})_{i\in [n]}}}\left(\frac{P_{Y^n}P_{(R_{M,i})_{i\in [n]}}}{P_{Y^n,(R_{M,i})_{i\in [n]}}}\mathbbm{1}(\kappa_j\leq \eta))\right)
  \\&\leq \sum_{j\neq M} \lim_{\eta\to \infty} 
  \frac{(1+o(1))}{c'}\mathbb{E}_{P_{Y_i,R_{M,i}}}\left(\prod_{i\in [n]}\frac{P_{Y_i}P_{R_{M,i}}}{P_{Y_i,R_{M,i}}}\mathbbm{1}(\kappa_j\leq \eta))\right)
  \\&\leq 
  \sum_{j\neq M} \lim_{\eta\to \infty} 
  \frac{(1+o(1))}{c'}\mathbb{E}_{P_{Y_i,R_{M,i}}}\left(e^{-\log\frac{1}{\epsilon}-I_0(M)})\right)
  \\&=  \sum_{j\neq M} \frac{1}{c'}\epsilon P_M(j)\leq \frac{1}{c'}\epsilon(1+o(1)).
\end{align*}
where in (a) we have used the fact that $P_{(R_{j,i})_{i\in [n]}}= P_{(R_{M,i})_{i\in [n]}}, j\in [m]$. 

\end{appendices}

 \clearpage

\bibliographystyle{unsrt}
\bibliography{References}

\end{document}